%%
%% Ito-Jensen manuscript adapted to JMP substyle
%%

%\documentclass[aip,jmp,reprint]{revtex4-1}

\documentclass[12pt,a4paper]{article}

%% start of IJ additions
\usepackage{amssymb}
\usepackage{amsmath}
\usepackage{amsthm}

\theoremstyle{plain}
\newtheorem{theorem}{Theorem}[section]
\newtheorem{proposition}[theorem]{Proposition}
\newtheorem{lemma}[theorem]{Lemma}
\newtheorem{corollary}[theorem]{Corollary}

\theoremstyle{definition}
\newtheorem{definition}[theorem]{Definition}

\newtheorem{assumption}[theorem]{Assumption}

\theoremstyle{remark}
\newtheorem{remark}[theorem]{Remark}
\newtheorem*{remark*}{Remark}

\numberwithin{equation}{section}

\DeclareMathOperator{\Ker}{Ker}
\newcommand{\ket}[1]{\lvert{#1}\rangle}
\newcommand{\bra}[1]{\langle{#1}\rvert} 
\newcommand{\ip}[2]{\langle{#1},{#2}\rangle}
%% end of IJ additions

\begin{document}
\allowdisplaybreaks
\title{Resolvent expansions for the Schr\"{o}dinger
operator on the discrete half-line
}
\author{Kenichi {\scshape Ito}\footnote{Department of Mathematics, Graduate School of Science, Kobe University,
1-1, Rokkodai, Nada-ku, Kobe 657-8501, Japan.
E-mail: \texttt{ito-ken@math.kobe-u.ac.jp}.}
\and
Arne {\scshape Jensen}\footnote{Department of Mathematical Sciences,
Aalborg University, Fredrik Bajers Vej 7G, DK-9220 Aalborg \O{}, Denmark.
E-mail: \texttt{matarne@math.aau.dk}.}
}

\date{}
\maketitle

%\title{Resolvent expansions for the Schr\"{o}dinger
%operator on the discrete half-line} 
%
%\author{Kenichi Ito}
%\email{ito-ken@math.kobe-u.ac.jp}
%\affiliation{Department of Mathematics, Graduate School of Science, Kobe University,
%1-1, Rokkodai, Nada-ku, Kobe 657-8501, Japan.}
%
%\author{Arne Jensen}
%\email{matarne@math.aau.dk}
%\affiliation{Department of Mathematical Sciences,
%Aalborg University, Fredrik Bajers Vej 7G, DK-9220 Aalborg \O{}, Denmark.}
%
%\date{\today}

\begin{abstract}
Simplified models of transport in mesoscopic systems are often based on a small sample connected to a finite number of leads. The leads are often modelled using the Laplacian on the discrete half-line $\mathbb N$. Detailed studies of the transport  near thresholds require detailed information on the resolvent of the Laplacian on the discrete half-line.
This paper presents a complete study of threshold
resonance states and resolvent expansions at a threshold 
for the Schr\"odinger operator  
on the discrete half-line $\mathbb N$ 
with a general boundary condition. 
A precise description of the expansion coefficients
reveals their exact correspondence to 
the generalized eigenspaces, or the threshold types.
The presentation of the paper is adapted from 
that of Ito-Jensen [Rev.\ Math.\ Phys.\ {\bf 27} (2015), 1550002 (45 pages)],
implementing the expansion scheme of 
Jensen-Nenciu [Rev.\ Math.\ Phys.\ \textbf{13} (2001), 717--754, 
\textbf{16} (2004), 675--677]
in its full generality.
\end{abstract}

%\pacs{}% insert suggested PACS numbers in braces on next line

\section{Introduction}
Simplified models of transport in mesoscopic systems are often based on a small sample connected to a finite number of leads. The leads are often modelled using the Laplacian on the discrete half-line $\mathbb N$. Detailed studies of the transport near thresholds require detailed information on the resolvent of the Laplacian on the discrete half-line. For an example see Cornean-Jensen-Nenciu\cite{CJN} and references therein. The results in this paper allow one to obtain more detailed information on the adiabatic limit studied in Cornean-Jensen-Nenciu\cite{CJN}.

Let $H_0$ be the positive Laplacian
on the discrete half-line $\mathbb N=\{1,2,\ldots\}$, i.e., 
for any sequence $x\colon \mathbb N\to \mathbb C$ we define the sequence
$H_0x\colon \mathbb N\to \mathbb C$ by
\begin{equation}
(H_0x)[n]=-(x[n+1]+x[n-1]-2x[n]).
\label{160814}
\end{equation}
The definition \eqref{160814} is incomplete without 
assigning a \textit{boundary condition}, 
or a \textit{boundary value} $x[0]$ 
for each sequence $x\colon\mathbb N\to\mathbb C$. 
In this paper we focus on 
the \textit{Dirichlet boundary condition}:
\begin{align}
x[0]=0.
\label{160820}
\end{align}
In other words, we set for any sequence $x\colon \mathbb N\to \mathbb C$
\begin{equation}\label{D-formula}
(H_0x)[n]=\begin{cases}
2x[1]-x[2]& \text{for }n=1,\\
2x[n]-x[n+1]-x[n-1] & \text{for }n\geq2.
\end{cases}
\end{equation}

The restriction of $H_0$ to the Hilbert space $\mathcal H=\ell^2(\mathbb{N})$
is bounded and self-adjoint,
and its spectrum is 
\begin{align}
\sigma(H_0)=\sigma_{\mathrm{ac}}(H_0)=[0,4].
\label{1608202240}
\end{align}
The points $0,4\in \sigma(H_0)$ are called the \textit{thresholds}.
The purpose of this paper is to analyze the threshold behavior 
of a perturbed Laplacian $H=H_0+V$ on the discrete half-line $\mathbb N$. 
We compute an asymptotic expansion of the resolvent $R(z)=(H-z)^{-1}$
at the threshold $z=0$, and, in particular, 
describe a precise relation between the expansion 
coefficients and the generalized eigenspaces.
The generalized eigenspace considered here 
is the largest possible one, and includes the threshold resonance states as a part of it. 
These investigations are done in the same manner as in Ito-Jensen\cite{IJ},
employing the expansion scheme given in Jensen-Nenciu\cite{JN0,JN1}. The technique used in Ito-Jensen\cite{IJ} to treat the threshold $4$ can be applied here. Hence we discuss only the threshold zero.

The starting point of our analysis is the free resolvent kernel discussed 
in Section~\ref{16082022}. 
The main results of the paper will be presented in Section~\ref{1608202219}.
Actually general boundary conditions are included in our setting 
as specific forms of perturbations of the Dirichlet Laplacian. 
We will see this in Section~\ref{16081823}. Section~\ref{12.12.19.2.5} is devoted to an analysis of the generalized eigenspace.
After a short preliminary presentation in Section~\ref{1608217},
the proofs of the main theorems will be provided in 
Sections~\ref{1608241726}--\ref{1608241729}
according to each threshold type. 
There we will repeatedly use the inversion formula from Jensen-Nenciu\cite{JN0}, 
adapted to the case at hand.
As a reference we will quote the formula in the form given in Ito-Jensen\cite{IJ} 
in Appendix~\ref{1608211941}.

There is a large number of papers on discrete Schr\"{o}dinger operators. However, as far as we are aware, the complete threshold analyses and the resolvent expansions presented here are new.

\section{The free Laplacian}\label{16082022}

In this section we discuss properties of 
the free Dirichlet Laplacian $H_0$  
on the discrete half-line $\mathbb N$ defined by 
\eqref{160814} and \eqref{160820}, or by \eqref{D-formula}.
The properties presented here 
may be considered as a prototype of our main results for a perturbed
Laplacian. They will be employed repeatedly 
both in stating and in proving the main theorems. 

Let $\widehat{\mathcal H}=L^2(0,\pi)$, 
and define the Fourier transform 
$\mathcal F\colon\mathcal H\to \widehat{\mathcal H}$
and its inverse 
$\mathcal F^*\colon\widehat{\mathcal H}\to \mathcal H$
by
\begin{align*}
(\mathcal Fx)(\theta)
&=\sqrt{2/\pi}\sum_{n=1}^\infty x[n]\sin(n\theta)
,\\
(\mathcal F^*f)[n]
&=\sqrt{2/\pi}\int_0^\pi f(\theta)\sin(n\theta)\,d\theta.
\end{align*}
Then we have a spectral representation of $H_0$: 
\begin{align}
\mathcal FH_0\mathcal F^*
=2-2\cos\theta=4\sin^2(\theta/2).
\label{1608146}
\end{align}
This in fact verifies \eqref{1608202240}.
Using the expression \eqref{1608146},
or antisymmetrizing the kernel of resolvent on the whole line $\mathbb Z$,
see e.g.\ Ito-Jensen\cite{IJ}, 
we can compute the kernel of resolvent $R_0(z)=(H_0-z)^{-1}$:
For $z\in \mathbb C\setminus [0,4]$ with $z\sim 0$ we have
\begin{align}\label{D-resolvent-kernel}
R_0(z)[n,m]=\frac{i}{2\sin \phi}\bigl(e^{i\phi|n-m|}-e^{i\phi(n+m)}\bigr),
\quad n,m\in\mathbb N.
\end{align}
Here the variable $z\in \mathbb{C}\setminus [0,4]$ is related to $\phi$
through the correspondence 
\begin{align*}
z=4\sin^2(\phi/2),\quad \mathop{\mathrm{Im}}\phi>0.
%\label{11.1.26.15.17}
\end{align*}
Using the expression \eqref{D-resolvent-kernel}, we can explicitly
compute the expansion of $R_0(z)$ around $z=0$.
Before stating it let us introduce the notation employed in this paper.

\medskip
\noindent
{\bf Notation.}
In expansions we change variable from $z\in\mathbb{C}\setminus [0,\infty)$ to 
$\kappa$. These variables are related as 
\begin{equation}
\kappa=-i\sqrt{z}, \quad \mathop{\mathrm{Im}} z >0,\quad \mathop{\mathrm{Im}}\sqrt{z}>0.
\end{equation}
We freely write $R(z)$ as $R(\kappa)$, etc. We use the notation 
$$n\wedge m=\min\{n,m\},\quad n\vee m=\max\{n,m\}.$$
For $s\in \mathbb R$ we let 
\begin{align*}
\mathcal L^s&=
\ell^{1,s}(\mathbb{N})\\
&=\bigl\{x\colon \mathbb{N}\to \mathbb{C};\\ 
&\qquad\|x\|_{1,s}=\sum_{n\in\mathbb{N}}(1+n^2)^{s/2}|x[n]|<\infty\bigr\},\\
(\mathcal L^s)^*
&=\ell^{\infty,-s}(\mathbb{N})\\
&=\bigl\{x\colon \mathbb{N}\to \mathbb{C};\\
&\qquad\|x\|_{\infty,-s}=\sup_{n\in\mathbb{N}}(1+n^2)^{-s/2}|x[n]|<\infty\bigr\}.
\end{align*}
We denote the set of all bounded operators 
from a general Banach space $\mathcal K$ to another 
$\mathcal K'$ by $\mathcal B(\mathcal K,\mathcal K')$,
and abbreviate $\mathcal B(\mathcal K)=\mathcal B(\mathcal K,\mathcal K)$.
In particular, we write
$${\mathcal B}^s=\mathcal B(\mathcal L^s,(\mathcal L^s)^*).$$
We replace $\mathcal B$ by $\mathcal C$ when considering 
the corresponding spaces of compact operators.
Define the sequences 
$\mathbf n\in(\mathcal L^1)^*$ and $\mathbf 1\in (\mathcal L^0)^*$ by
\begin{equation}\label{def-1-n}
\mathbf{n}[m]=m\quad\text{and}\quad\mathbf{1}[m]=1,\quad m\in\mathbb N,
\end{equation}
respectively. 
Throughout the paper we frequently use  the 
\textit{pseudo-inverse} $A^\dagger$
of a self-adjoint operator $A$. For this concept we refer to Appendix~\ref{1608211941}.

\begin{proposition}\label{prop12}
Let $N\ge 0$ be any integer.
As $\kappa\to 0$ with $\mathop{\mathrm{Re}}\kappa>0$,
the resolvent $R_0(\kappa)$ has the expansion:
\begin{align}
R_0(\kappa)=\sum_{j=0}^N\kappa^jG_{0,j}+{\mathcal O}(\kappa^{N+1})
\quad \text{in }\mathcal B^{N+2},
\label{free-expan}
\end{align}
with $G_{0,j}\in\mathcal B^{j+1}$ for $j$ even,
and $G_{0,j}\in\mathcal B^{j}$ for $j$ odd, satisfying
\begin{equation}
\begin{split}
H_0G_{0,0}&=G_{0,0}H_0=I,\\ 
H_0G_{0,1}&=G_{0,1}H_0=0,\\ 
H_0G_{0,j}&=G_{0,j}H_0=-G_{0,j-2}\quad \text{for }j\ge 2.
\end{split}
\label{18082023}
\end{equation}
The coefficients $G_{0,j}$ have explicit kernels,
and the first few are given by 
\begin{align}
G_{0,0}[n,m]&=n\wedge m,
\label{G00}\\
G_{0,1}[n,m]&=-n\cdot m,\label{G01}\\
G_{0,2}[n,m]
&=-\tfrac{1}{6}(n\wedge m)\notag\\
&\quad+\tfrac{1}{6}(n\wedge m)^3+\tfrac{1}{2}n\cdot m\cdot(n\vee m),
\label{G02}\\
\begin{split}
G_{0,3}[n,m]
&=\tfrac{5}{24}n\cdot m-\tfrac{1}{6}n^3\cdot m-\tfrac{1}{6}n\cdot m^3.
\end{split}
\label{G03}
\end{align}
\end{proposition}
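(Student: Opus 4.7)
The plan is to Taylor-expand the closed-form kernel \eqref{D-resolvent-kernel} directly in $\kappa$. The relations $z=4\sin^2(\phi/2)$ and $\kappa=-i\sqrt z$ give $2\sin(\phi/2)=i\kappa$, so $w:=e^{i\phi}$ satisfies $w+w^{-1}=2+\kappa^2$. Choosing the branch with $w(0)=1$ yields
\[
w(\kappa)=\tfrac12\bigl(2+\kappa^2-\kappa\sqrt{4+\kappa^2}\bigr),
\]
which is analytic in $\kappa$ near $0$. Similarly $2\sin\phi=i\kappa\sqrt{4+\kappa^2}$, so the kernel rewrites as
\[
R_0(\kappa)[n,m]=\frac{w^{|n-m|}-w^{n+m}}{\kappa\sqrt{4+\kappa^2}}.
\]
Since the numerator vanishes at $\kappa=0$, the simple pole cancels, and $R_0(\kappa)[n,m]$ extends to an analytic function of $\kappa$ in a neighbourhood of $0$ for each fixed $n,m$.

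Next I would write $w^k=\exp(k\log w)$ with $\log w=-\kappa+\kappa^3/24+O(\kappa^5)$, so that the coefficient of $\kappa^j$ in $w^k$ is a polynomial in $k$ of degree at most $j$. Substituting $k=|n-m|$ and $k=n+m$ and forming the difference, the coefficient $G_{0,j}[n,m]$ of $\kappa^j$ in $R_0(\kappa)[n,m]$ emerges as a symmetric polynomial in $(n,m)$, expressible through $n\wedge m$, $n\vee m$, and $nm$. The explicit formulas \eqref{G00}--\eqref{G03} are then read off by bookkeeping the relevant Taylor coefficients and applying the identities $n+m-|n-m|=2(n\wedge m)$, $(n+m)^2-|n-m|^2=4nm$, and $(n+m)^3-|n-m|^3=2(n\wedge m)\bigl((n\wedge m)^2+3(n\vee m)^2\bigr)$.

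For the operator identities \eqref{18082023} I would substitute \eqref{free-expan} into the resolvent equations $(H_0+\kappa^2)R_0(\kappa)=I=R_0(\kappa)(H_0+\kappa^2)$ and match coefficients of $\kappa^j$; the action of $H_0$ on the polynomially growing kernels is well-defined since $H_0$ is the bounded three-term recurrence \eqref{D-formula}. The mapping properties $G_{0,j}\in\mathcal B^{j+1}$ for $j$ even and $G_{0,j}\in\mathcal B^j$ for $j$ odd follow from the polynomial degree of $G_{0,j}[n,m]$ via standard weighted $\ell^1\to\ell^\infty$ kernel bounds. Finally, the remainder in \eqref{free-expan} follows from Taylor's theorem applied to the analytic function $\kappa\mapsto R_0(\kappa)[n,m]$; the next-order coefficient grows polynomially of degree at most $N+2$ in $(n,m)$, placing the remainder in $\mathcal B^{N+2}$.

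The main obstacle is not conceptual but computational: one must carefully combine the three Taylor expansions of $w^{|n-m|}$, $w^{n+m}$, and $(4+\kappa^2)^{-1/2}$, group by powers of $\kappa$, and re-express the resulting polynomials in $|n-m|$ and $n+m$ through $n\wedge m$, $n\vee m$, and $nm$ to match the stated forms \eqref{G00}--\eqref{G03}.
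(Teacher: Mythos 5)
Your proposal is correct and follows essentially the same route as the paper: expand the explicit kernel \eqref{D-resolvent-kernel} in $\kappa$ (the paper cites this computation to Ito--Jensen and omits it, while you carry out the substitution $w=e^{i\phi}$ and the bookkeeping explicitly), and derive \eqref{18082023} by matching powers of $\kappa$ in the resolvent identities $(H_0+\kappa^2)R_0(\kappa)=R_0(\kappa)(H_0+\kappa^2)=I$ on suitable decaying sequences. The only point to tighten is the remainder estimate: pointwise Taylor's theorem for fixed $(n,m)$ should be replaced by a bound uniform in $(n,m)$, e.g.\ via $\bigl|e^{-x}-\sum_{j\le N+1}(-x)^j/j!\bigr|\le |x|^{N+2}/(N+2)!$ for $\mathop{\mathrm{Re}}x\ge 0$ applied to $w^k=e^{-k\mu(\kappa)}$, which yields the stated $\mathcal O(\kappa^{N+1})$ error in $\mathcal B^{N+2}$.
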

\begin{proof}
The expansion \eqref{free-expan} with expressions \eqref{G00}--\eqref{G03}
follows directly from \eqref{D-resolvent-kernel},
cf.\ Ito-Jensen\cite[Proposition 2.1]{IJ}.
To see the identities in \eqref{18082023} it suffices to 
note that for any rapidly decreasing sequence $\Psi\colon \mathbb N\to\mathbb C$
we have 
\begin{align*}
(H_0+\kappa^2)R_0(\kappa)\Psi=R_0(\kappa)(H_0+\kappa^2)\Psi=\Psi
\end{align*}
for $\mathop{\mathrm{Re}}\kappa>0$. The details of the computations are omitted.
\end{proof}

We note that the sequence $\mathbf n\in (\mathcal L^1)^*$ 
is a \textit{generalized eigenfunction} for $H_0$, 
and the coefficient $G_{0,1}$ is a 
\textit{generalized projection} onto it:
\begin{align*}
H_0\mathbf n=0,\quad 
G_{0,1}=-|\mathbf n\rangle\langle\mathbf n|.
\end{align*}
On the other hand, the sequence $\mathbf 1\in (\mathcal L^0)^*$,
which with $\mathbf n$ forms a basis of 
the generalized eigenspace for the Laplacian on the whole line $\mathbb Z$,
is not a generalized eigenfunction on $\mathbb N$.
It does not appear in the above expansion coefficients, either.

\section{The perturbed Laplacian}\label{1608202219}

Now we consider the perturbed Laplacian $H=H_0+V$ on $\mathbb N$,
and state the main theorems of the paper.
These theorems reveal a precise relation between the 
\textit{generalized eigenspace} and the expansion coefficients of the resolvent
at threshold. 

The class of interactions considered here is from Ito-Jensen\cite{IJ}.
It is general enough to contain non-local interactions,
but is formulated a little abstractly.
We refer to Ito-Jensen\cite[Appendix B]{IJ} for examples. We note that this class of interactions is closed under addition, see Ito-Jensen\cite{IJ}.

Recall the notation defined right before Proposition~\ref{prop12}.

\begin{assumption}\label{assumV}
Let $V\in {\mathcal B}(\mathcal H)$ be self-adjoint,
and assume that there exist 
an injective operator $v\in \mathcal B({\mathcal K},\mathcal L^\beta)
\cap \mathcal C({\mathcal K},\mathcal L^1)$
with $\beta\ge 1$
and a self-adjoint unitary operator
$U\in \mathcal B({\mathcal K})$,
both defined on some Hilbert space ${\mathcal K}$,
such that 
\begin{equation*}
V=vUv^*\in \mathcal B((\mathcal L^{\beta})^*,\mathcal L^\beta)
\cap 
\mathcal C((\mathcal L^1)^*,\mathcal L^1).
\end{equation*}
\end{assumption}

Under Assumption~\ref{assumV}
we let 
$$H=H_0+V,\quad R(z)=(H-z)^{-1}.$$ 
The operator $H$ is a bounded self-adjoint operator on $\mathcal H$ with $\sigma_{\rm ess}(H)=[0,4]$. Using the Mourre method (see Boutet de Monvel-Shabani\cite{BMS}) one can show that $\sigma_{\rm sc}(H)=\emptyset$. For local $V$ other conditions for 
$\sigma_{\rm sc}(H)=\emptyset$ are given in Damanik-Killip\cite{DK}.

Let us consider the solutions to the equation $H\Psi=0$ in the largest space where it can be defined.  Define the (\textit{generalized}) \textit{zero eigenspaces} by
\begin{align}
\widetilde{\mathcal{E}}&=\{\Psi\in(\mathcal{L}^{\beta})^*|\, H\Psi=0\},\\
\mathcal{E}&=\widetilde{\mathcal{E}}\cap(\mathbb{C}\mathbf{1}\oplus\mathcal{L}^{\beta-2}),\label{16071616}\\
\mathsf{E}&=\widetilde{\mathcal{E}}\cap\mathcal{L}^{\beta-2}.\label{16071617}
\end{align}
These spaces will be analyzed in detail in Section~\ref{12.12.19.2.5}.
Here we only quote some of the results given there: 
Under Assumption~\ref{assumV} with $\beta\ge 1$
the generalized eigenfunctions have a specific asymptotics:
\begin{align}
\widetilde{\mathcal{E}}\subset\mathbb{C}\mathbf{n}\oplus\mathbb{C}\mathbf{1}\oplus\mathcal{L}^{\beta-2},
\label{13.3.7.13.48}
\end{align}
and their dimensions satisfy 
\begin{align*}
\dim(\widetilde{\mathcal E}/\mathcal E)+\dim(\mathcal E/\mathsf E)=1,\quad 
0\le \dim\mathsf E<\infty.
\end{align*}
We introduce the same classification of the threshold  
as in Ito-Jensen\cite[Definition 1.6]{IJ}.

\begin{definition}\label{def-reg-excp}
The threshold $z=0$ is said to be 
\begin{enumerate}
\item 
a \textit{regular point}, 
if $\mathcal{E}= \mathsf{E}= \{0\}$;
\item
an \textit{exceptional point of the first kind}, 
if $\mathcal{E}\supsetneq \mathsf{E}= \{0\}$;
\item
an \textit{exceptional point of the second kind}, 
if 
$\mathcal{E}=\mathsf{E}\supsetneq \{0\}$;
\item
an \textit{exceptional point of the third kind}, 
if 
$\mathcal{E}\supsetneq \mathsf{E}\supsetneq \{0\}$.
\end{enumerate}
\end{definition}

It would be more precise to call a
function in $\widetilde{\mathcal E}$ a \textit{generalized eigenfunction},
that in $\mathcal E$ a \textit{resonance function},
and that in $\mathsf E$ an \textit{eigenfunction},
but sometimes all of them are called simply \textit{eigenfunctions}.
In particular, we call $\Psi_c\in \mathcal E$ a
\textit{canonical resonance function} if it satisfies
\begin{align*}
\forall \Psi\in \mathsf E\ \ 
\langle \Psi, \Psi_c\rangle=0,
\quad \text{and}\quad 
\Psi_c- \mathbf 1\in\mathcal L^{\beta-2}.
\end{align*}
We remark that the latter asymptotics for $\Psi_c\in \mathcal E$
is equivalent to 
\begin{align*}
\langle V\mathbf n,\Psi_c\rangle=-1.
\end{align*}
We will prove this equivalence in Proposition~\ref{13.1.16.2.51}.

We now state the resolvent expansions in the four cases given 
in Definition~\ref{def-reg-excp}. We impose assumptions 
on the parameter $\beta$ from Assumption~\ref{assumV} in each of the four cases. For simplicity we state the results for integer values of $\beta$. The extension to general $\beta$ is straightforward but leads to more complicated statements of the results and requires a different approach to the error estimates in the theorems below.
Let us set 
\begin{align*}
M_0=U+v^*G_{0,0}v\colon\mathcal K\to\mathcal K,
\end{align*}
and denote its pseudo-inverse by $M_0^\dagger$,
see Appendix~\ref{1608211941}.

\begin{theorem}\label{thm-reg}
Assume that the threshold $0$ is a \emph{regular} point,
and that Assumption~\ref{assumV} is fulfilled for some integer $\beta\ge 2$.
Then
\begin{equation}
R(\kappa)=\sum_{j=0}^{\beta-2}\kappa^jG_j+\mathcal{O}(\kappa^{\beta-1})
\quad
\text{in }\mathcal{B}^{\beta-2}
\end{equation}
with 
$G_j\in\mathcal B^{j+1}$ for $j$ even,
and $G_j\in\mathcal B^{j}$ for $j$ odd.
The coefficients $G_j$ can be computed explicitly.
The first two coefficients can be expressed as  
\begin{align}
G_0&
=G_{0,0}-G_{0,0} vM_0^\dagger v^*G_{0,0},\label{160822427}
\\
G_1&=-|\widetilde\Psi_c\rangle\langle\widetilde\Psi_c|,\label{160822428}
\end{align}
where $\widetilde\Psi_c\in\widetilde{\mathcal E}$ is 
a generalized eigenfunction with asymptotics
\begin{align*}
m^{-1}\widetilde\Psi_c[m]\to 1\quad \text{as }m\to\infty.
\end{align*}
\end{theorem}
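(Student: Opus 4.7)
The plan is to combine the Feshbach--Schur factorization
\[
R(\kappa) = R_0(\kappa) - R_0(\kappa) v\, M(\kappa)^{-1} v^* R_0(\kappa), \qquad M(\kappa) := U + v^* R_0(\kappa) v,
\]
valid for $\mathop{\mathrm{Re}}\kappa>0$ small by the second resolvent identity and Assumption~\ref{assumV}, with the expansion of $R_0(\kappa)$ from Proposition~\ref{prop12}. The latter gives
$M(\kappa)=M_0+\kappa M_1+\dots+\kappa^{\beta-2}M_{\beta-2}+\mathcal O(\kappa^{\beta-1})$ on $\mathcal K$, where $M_j=v^*G_{0,j}v$; note that $M_1=-|v^*\mathbf n\rangle\langle v^*\mathbf n|$ is rank one. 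I would then invert $M(\kappa)$ using the Jensen--Nenciu inversion formula recalled in Appendix~\ref{1608211941}. In the \emph{regular} case the obstruction subspaces appearing in that formula are trivial because, by the analysis in Section~\ref{12.12.19.2.5}, $\ker M_0$ corresponds (via $f\mapsto G_{0,0}vf$ with appropriate corrections) to the quotients controlled by $\mathcal E$ and $\mathsf E$, both of which vanish by hypothesis. Hence the inversion terminates at leading order and yields
\[
M(\kappa)^{-1}=M_0^\dagger+\kappa\, M_0^\dagger|v^*\mathbf n\rangle\langle v^*\mathbf n|M_0^\dagger+\mathcal O(\kappa^2),
\]
and more generally an expansion of order $\beta-2$ in $\mathcal B(\mathcal K)$.

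Substituting both series back into the factorization and collecting powers of $\kappa$ produces the asserted expansion $R(\kappa)=\sum_{j=0}^{\beta-2}\kappa^j G_j+\mathcal O(\kappa^{\beta-1})$ in $\mathcal B^{\beta-2}$; the parity-dependent mapping properties $G_j\in\mathcal B^{j+1}$ for even $j$ and $G_j\in\mathcal B^j$ for odd $j$ are inherited from those of the $G_{0,j}$ together with the compactness of $v$. The constant term reads off directly as $G_0=G_{0,0}-G_{0,0}vM_0^\dagger v^*G_{0,0}$, giving \eqref{160822427}. For \eqref{160822428}, collecting $\kappa^1$ contributions yields
\[
G_1 = G_{0,1} - G_{0,1}vM_0^\dagger v^*G_{0,0} - G_{0,0}vM_0^\dagger v^*G_{0,1} - G_{0,0}vM_0^\dagger|v^*\mathbf n\rangle\langle v^*\mathbf n|M_0^\dagger v^*G_{0,0}.
\]
Setting $\widetilde\Psi_c:=\mathbf n-G_{0,0}vM_0^\dagger v^*\mathbf n$ and using the self-adjointness of $G_{0,0}$ and $M_0^\dagger$, a direct expansion of $|\widetilde\Psi_c\rangle\langle\widetilde\Psi_c|$ reproduces exactly these four terms up to the overall sign, so $G_1=-|\widetilde\Psi_c\rangle\langle\widetilde\Psi_c|$.

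It remains to verify the two stated properties of $\widetilde\Psi_c$. The asymptotics follows from $(G_{0,0}vM_0^\dagger v^*\mathbf n)[m]=\sum_k (k\wedge m)(vM_0^\dagger v^*\mathbf n)[k]\to\langle\mathbf n,vM_0^\dagger v^*\mathbf n\rangle$ as $m\to\infty$ by dominated convergence (the sequence $vM_0^\dagger v^*\mathbf n$ lies in $\mathcal L^1$), so $\widetilde\Psi_c[m]=m-\mathcal O(1)$ and $m^{-1}\widetilde\Psi_c[m]\to 1$. That $H\widetilde\Psi_c=0$ is a short calculation: using $H_0G_{0,0}=I$, $V=vUv^*$ and $U^2=I$ (whence $UM_0=I+Uv^*G_{0,0}v$) one reduces $H\widetilde\Psi_c$ to $vU(I-M_0M_0^\dagger)v^*\mathbf n$; since $I-M_0M_0^\dagger$ is the orthogonal projection onto $\ker M_0$, this vanishes by the regularity assumption invoked above. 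The main obstacle I foresee is the second step: one must verify that the regular hypothesis $\mathcal E=\mathsf E=\{0\}$ really is the algebraic condition allowing the Jensen--Nenciu inversion to terminate at leading order, and that the resulting remainder propagates through the composition $R_0 v M^{-1} v^* R_0$ with the uniform bound $\mathcal O(\kappa^{\beta-1})$ in $\mathcal B^{\beta-2}$.
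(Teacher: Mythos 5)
Your proposal is correct and follows essentially the same route as the paper: the factorization \eqref{second-resolvent} with $M(\kappa)=U+v^*R_0(\kappa)v$, the expansion of $M(\kappa)$ from Proposition~\ref{prop12}, inversion, and back-substitution to read off $G_0$ and $G_1$ with $\widetilde\Psi_c=(I-G_{0,0}vM_0^\dagger v^*)\mathbf n$. The only cosmetic differences are that the paper invokes Corollary~\ref{160616438} to get invertibility of $M_0$ and then uses the plain Neumann series (the Jensen--Nenciu formula with trivial kernel reduces to exactly that, so your "obstacle" is already settled there), and it identifies $\widetilde\Psi_c\in\widetilde{\mathcal E}$ via \eqref{12.12.19.6.56} rather than by your direct (also valid) computation of $H\widetilde\Psi_c=0$ and the asymptotics.
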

\begin{remark}
Under the assumption of Theorem~\ref{thm-reg}
the operator $M_0$ is actually invertible: $M_0^\dagger =M_0^{-1}$.
The operators $I+G_{0,0}V$ and $I+VG_{0,0}$ are also invertible,
and we have the expressions
\begin{align}
I-G_{0,0} vM_0^\dagger v^*
&=(I+G_{0,0}V)^{-1},\\
I- vM_0^\dagger v^*G_{0,0}
&=(I+VG_{0,0})^{-1}.
\label{16082620}
\end{align}
We will verify these right after the proof of Theorem~\ref{thm-reg}.
\end{remark}

\begin{theorem}\label{thm-ex1}
Assume that the threshold $0$ is an exceptional point of the \emph{first kind},
and that Assumption~\ref{assumV} is fulfilled for some integer $\beta\ge 3$.
Then
\begin{equation}\label{expand-first}
R(\kappa)=\sum_{j=-1}^{\beta-4}\kappa^jG_j+\mathcal{O}(\kappa^{\beta-3})\quad
\text{in }\mathcal{B}^{\beta-1}
\end{equation}
with 
$G_j\in\mathcal B^{j+3}$ for $j$ even,
and $G_j\in\mathcal B^{j+2}$ for $j$ odd.
The coefficients $G_j$ can be computed explicitly.
The first two coefficients can be expressed as 
\begin{align}
G_{-1}&=\ket{\Psi_c}\bra{\Psi_c},
\label{1608229}
\\
\begin{split}
G_0&
=G_{0,0}
-\bigl(G_{0,0}-|\Psi_c\rangle\langle\mathbf n|\bigr)
vM_0^\dagger v^*\bigl(G_{0,0}
-\bigl|\mathbf n\bigr\rangle\langle \Psi_c| \bigr)
\\&\quad
-\bigl[\| \Psi_c-\mathbf 1\|^2+2\mathop{\mathrm{Re}}\langle\mathbf 1,\Psi_c-\mathbf 1\rangle-\tfrac12\big]|\Psi_c\rangle\langle \Psi_c| 
\\&\quad
- |\Psi_c\rangle\bigl\langle\mathbf n\bigr|
-\bigl|\mathbf n\bigr\rangle\langle \Psi_c| 
,
\end{split}
\label{16082513}
\end{align}
where $\Psi_c\in\mathcal E$ is the canonical resonance function.
\end{theorem}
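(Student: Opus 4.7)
The plan is to reduce the computation of $R(\kappa)$ to the inversion of a finite-order expansion of an auxiliary operator on $\mathcal K$, and then read off the coefficients $G_j$ by substituting back. Using the factorization $V=vUv^*$ with $U^2=I$, the symmetric resolvent identity reads
\begin{equation*}
R(\kappa)=R_0(\kappa)-R_0(\kappa)\,v\,M(\kappa)^{-1}v^*R_0(\kappa),\qquad
M(\kappa)=U+v^*R_0(\kappa)v,
\end{equation*}
valid for $\kappa$ small with $\mathop{\mathrm{Re}}\kappa>0$. By Proposition~\ref{prop12} we obtain an expansion
\begin{equation*}
M(\kappa)=M_0+\kappa M_1+\kappa^2M_2+\cdots+\mathcal O(\kappa^{\beta-1})
\quad\text{in }\mathcal B(\mathcal K),
\end{equation*}
with $M_j=v^*G_{0,j}v$. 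In particular $M_1=-|v^*\mathbf n\rangle\langle v^*\mathbf n|$ is rank one.

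The key obstacle is that $M_0$ is not invertible at a first-kind exceptional point. I would handle this by invoking the Jensen--Nenciu inversion lemma quoted in Appendix~\ref{1608211941}. Let $S_0$ denote the orthogonal projection onto $\Ker M_0$. The results in Section~\ref{12.12.19.2.5} (together with the correspondence $u\mapsto\Psi:=-G_{0,0}vu$ sending $\Ker M_0$ to $\widetilde{\mathcal E}$) and the classification in Definition~\ref{def-reg-excp} will show that under the first-kind assumption one has $\dim\Ker M_0=1$, and that a generator $u_c\in\Ker M_0$ can be chosen so that $-G_{0,0}vu_c$ coincides up to a multiple of $\mathbf n$ with the canonical resonance $\Psi_c$; the normalization $\langle V\mathbf n,\Psi_c\rangle=-1$ (Proposition~\ref{13.1.16.2.51}) translates into an explicit nonzero value of $\langle u_c,M_1u_c\rangle=-|\langle v^*\mathbf n,u_c\rangle|^2$, which is exactly what is needed for the first application of the inversion formula to terminate.

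Applying the inversion formula once yields an expansion of the form
\begin{equation*}
M(\kappa)^{-1}=\kappa^{-1}A_{-1}+A_0+\kappa A_1+\cdots+\mathcal O(\kappa^{\beta-3}),
\end{equation*}
with $A_{-1}$ the rank-one operator supported on $\Ker M_0$ whose coefficient is fixed by $\langle u_c,M_1u_c\rangle$, and with the remaining $A_j$ built from $M_0^\dagger$ and the higher $M_j$. Substituting into the resolvent identity and collecting powers of $\kappa$, the $\kappa^{-1}$ term reduces, after the identification $G_{0,0}vu_c\leftrightarrow-\Psi_c$, to $|\Psi_c\rangle\langle\Psi_c|$, giving \eqref{1608229}. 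For the $\kappa^0$ term, several pieces combine: the bare term $G_{0,0}-G_{0,0}vA_0v^*G_{0,0}$ produces the bracketed operator in \eqref{16082513} involving $M_0^\dagger$, while the cross-contributions $G_{0,0}vA_{-1}v^*G_{0,1}$ and $G_{0,1}vA_{-1}v^*G_{0,0}$ (using $G_{0,1}=-|\mathbf n\rangle\langle\mathbf n|$) generate the terms $|\Psi_c\rangle\langle\mathbf n|$ and $|\mathbf n\rangle\langle\Psi_c|$. Finally the $G_{0,2}$-coefficient of $R_0(\kappa)$ meeting $A_{-1}$ contributes a scalar multiple of $|\Psi_c\rangle\langle\Psi_c|$; evaluating $\langle\Psi_c,G_{0,2}\Psi_c\rangle$ by writing $\Psi_c=\mathbf 1+(\Psi_c-\mathbf 1)$ with $\Psi_c-\mathbf 1\in\mathcal L^{\beta-2}$, together with $H_0G_{0,2}=-G_{0,0}$ and $H_0\Psi_c=-V\Psi_c$, produces the constant $\|\Psi_c-\mathbf 1\|^2+2\mathop{\mathrm{Re}}\langle\mathbf 1,\Psi_c-\mathbf 1\rangle-\tfrac12$ displayed in \eqref{16082513}.

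The bookkeeping for higher $G_j$ proceeds along the same lines: each new order uses one more term of the expansion of $M(\kappa)^{-1}$ and of $R_0(\kappa)$, so the hypothesis $\beta\ge 3$ is exactly what guarantees the expansion \eqref{expand-first} in $\mathcal B^{\beta-1}$ with the claimed regularity of each $G_j$, the remainder estimate following from the norm continuity of the remainder terms in Proposition~\ref{prop12} and the inversion formula. The main obstacle throughout is the identification between $\Ker M_0$ and $\mathcal E$ and the careful tracking of the scalar prefactor of $|\Psi_c\rangle\langle\Psi_c|$ in $G_0$; the remaining steps are algebraic.
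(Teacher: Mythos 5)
Your plan follows essentially the same route as the paper: the symmetric resolvent identity with $M(\kappa)=U+v^*R_0(\kappa)v$, one application of the Jensen--Nenciu inversion formula with $Q$ the projection onto $\Ker M_0$ (one-dimensional here, with $m_0=QM_1Q$ invertible on $Q\mathcal K$), Neumann series for the remaining inverses, identification of $-G_{0,0}v\Phi_c$ with the canonical resonance via $\langle V\mathbf n,\Psi_c\rangle=-1$, and evaluation of the $G_{0,2}$ quadratic form by splitting off the asymptotic $\mathbf 1$. Two small bookkeeping slips, neither fatal: the scalar in front of $|\Psi_c\rangle\langle\Psi_c|$ does not come from ``$G_{0,2}$ of $R_0(\kappa)$ meeting $A_{-1}$'' (at order $\kappa^0$ no such pairing occurs, since that would require a $\kappa^{-1}$ free-resolvent coefficient) but from $M_2=v^*G_{0,2}v$ entering the constant coefficient of $M(\kappa)^{-1}$ through the second-order term of $m(\kappa)$, and the quantity to evaluate is $\langle V\Psi_c,G_{0,2}V\Psi_c\rangle=\langle H_0\Psi_c,G_{0,2}H_0\Psi_c\rangle$ rather than the ill-defined $\langle\Psi_c,G_{0,2}\Psi_c\rangle$.
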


\begin{theorem}\label{thm-ex2}
Assume that the threshold $0$ is an exceptional point of the \emph{second kind},
and that Assumption~\ref{assumV} is fulfilled for some integer $\beta\ge 4$.
Then
\begin{equation}\label{expand-second}
R(\kappa)=\sum_{j=-2}^{\beta-6}\kappa^jG_j+\mathcal{O}(\kappa^{\beta-5})
\quad\text{in }
\mathcal B^{\beta-2}
\end{equation}
with $G_j\in\mathcal B^{j+3}$ for $j$ even,
and $G_j\in\mathcal B^{j+2}$ for $j$ odd.
The coefficients $G_j$ can be computed explicitly.
The first four coefficients can be expressed as 
\begin{align}
G_{-2}&=P_0,\label{ex2-G-2}
\\
G_{-1}&=0,\label{ex2-G-1}
\\
G_{0}&=
(I-P_0)
\bigl(G_{0,0}
-G_{0,0}
vM_0^\dagger v^*G_{0,0}
\bigr)(1-P_0)
,\label{ex2-G0}
\\
G_1&=
(I-P_0)
\bigl(I-G_{0,0}vM_0^\dagger v^*\bigr)G_{0,1}\notag\\
&\quad
\times\bigl(I-vM_0^\dagger v^*G_{0,0}\bigr)
(I-P_0)\notag
\\&\quad
-P_0G_{0,0}vM_0^\dagger  v^*G_{0,1}vM_0^\dagger v^*G_{0,0}P_0,
\label{ex2-G1}
\end{align}
where $P_0$ is the projection onto $\mathsf E$.
\end{theorem}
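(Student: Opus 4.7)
The plan is to apply the Jensen-Nenciu inversion scheme from Appendix~\ref{1608211941} to the sandwich $R(\kappa) = R_0(\kappa) - R_0(\kappa)\,v\,M(\kappa)^{-1}\,v^* R_0(\kappa)$, where $M(\kappa) = U + v^* R_0(\kappa) v$ arises from $V = vUv^*$ together with $U^2 = I$. Substituting the expansion of $R_0(\kappa)$ from Proposition~\ref{prop12} gives $M(\kappa) = M_0 + \kappa M_1 + \kappa^2 M_2 + \cdots$ with $M_j = v^* G_{0,j} v$ for $j \geq 1$, so the whole theorem reduces to inverting $M(\kappa)$ asymptotically and re-inserting the result.

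The decisive algebraic input is an identification of $\Ker M_0$ with $\mathsf E$. The analysis in Section~\ref{12.12.19.2.5} exhibits a map $\phi \mapsto G_{0,0} v \phi$ from $\Ker M_0$ into $\widetilde{\mathcal E}$, and the asymptotic $(G_{0,0} v \phi)[n] \to \langle v^* \mathbf n, \phi \rangle$ shows that this map lands in $\mathsf E$ precisely when $\phi \perp v^* \mathbf n$. Under the second-kind hypothesis $\mathcal E = \mathsf E$ the whole of $\Ker M_0$ satisfies this, and denoting by $Q$ the orthogonal projection onto $\Ker M_0$ in $\mathcal K$ we obtain $Q v^* \mathbf n = 0$. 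Combined with $G_{0,1} = -|\mathbf n\rangle\langle \mathbf n|$ from Proposition~\ref{prop12} this yields the key vanishing $M_1 Q = Q M_1 = 0$. Running the inversion formula with auxiliary projection $Q$, the effective operator on $Q\mathcal K$ therefore begins at order $\kappa^2$: explicitly, $m(\kappa) = \kappa^2\,Q(M_2 - M_1 M_0^\dagger M_1) Q + O(\kappa^3) = \kappa^2\,Q M_2 Q + O(\kappa^3)$, and invertibility of the leading coefficient $m_2$ on $Q\mathcal K$ is precisely the condition distinguishing the second from the third kind. One Feshbach round therefore closes the inversion, giving $m(\kappa)^{-1} = \kappa^{-2} m_2^{-1} + O(\kappa^{-1})$ on $Q\mathcal K$.

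Inserting this expansion back into the sandwich produces the four claimed coefficients. The $\kappa^{-2}$ contribution comes out as $-G_{0,0} v\,Q\,m_2^{-1}\,Q\,v^* G_{0,0}$, and I would identify it with $P_0$ by evaluating on a basis $\{\phi_\alpha\}$ of $\Ker M_0$: the vectors $\psi_\alpha := G_{0,0} v \phi_\alpha$ form a basis of $\mathsf E$, and using $M_1 \phi_\alpha = 0$ together with the relations \eqref{18082023} one checks that $-\langle \phi_\alpha, m_2 \phi_\beta\rangle$ coincides with the Gram matrix $\langle \psi_\alpha, \psi_\beta\rangle$, forcing the operator to equal $P_0$. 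The vanishing \eqref{ex2-G-1} is then immediate: every potential $\kappa^{-1}$ contribution from the Jensen-Nenciu formula factors through $M_1 Q$, $Q M_1$, or $G_{0,1} v Q = -\mathbf n \langle Q v^* \mathbf n,\,\cdot\,\rangle$, all of which vanish. Finally, \eqref{ex2-G0} and \eqref{ex2-G1} emerge from the non-singular part of the expansion; the $(I-P_0)$ sandwiches reflect that the action on $\mathsf E$ has already been absorbed into $G_{-2}$, and are produced by the identity $M_0 M_0^\dagger + Q = I$ together with the just-established identification $G_{0,0} v\,Q\,m_2^{-1}\,Q\,v^* G_{0,0} = -P_0$. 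The main obstacle I anticipate is organizing the many cross terms arising at order $\kappa^1$ so that they collapse into the two-piece form of \eqref{ex2-G1}, where the $(I - P_0)$ block captures the regular contribution and the $P_0$ block captures the next-order correction to the singular part of $m(\kappa)^{-1}$.
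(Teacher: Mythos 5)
Your overall route is the same as the paper's: the second resolvent identity \eqref{second-resolvent}, the expansion of $M(\kappa)$, one application of the inversion formula with $Q$ the orthogonal projection onto $\Ker M_0$, the key vanishing $Qv^*\mathbf n=0$ (hence $M_1Q=QM_1=0$) extracted from $\mathcal E=\mathsf E$, and the identification of the $\kappa^{-2}$ coefficient with $P_0$ via the quadratic-form identity $\langle v\phi,G_{0,2}v\phi'\rangle=-\langle G_{0,0}v\phi,G_{0,0}v\phi'\rangle$ on $Q\mathcal K$. But there are genuine gaps. First, your justification for invertibility of the leading effective coefficient $QM_2Q$ on $Q\mathcal K$ is incorrect: this invertibility is \emph{not} ``the condition distinguishing the second from the third kind'' --- by Corollary~\ref{160616438} that distinction is whether $m_0=QM_1Q$ is zero (second kind) or nonzero and non-invertible (third kind). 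Invertibility of $QM_2Q$ is a separate fact that must be proved; the paper gets it indirectly (a further application of Proposition~\ref{12.11.9.3.28} would produce a singularity $\kappa^{-j}$, $j\ge 3$, contradicting self-adjointness of $H$), and alternatively it follows from the very identity you invoke, which shows $QM_2Q$ is negative definite on $Q\mathcal K$. Second, that identity is exactly Lemma~\ref{lemma24} of the paper and is not a one-step consequence of \eqref{18082023}: those relations are established on decaying sequences, and to use them here you must control the $\mathbf n$-direction, e.g.\ by writing $G_{0,2}x=-G_{0,0}G_{0,0}x+c\,\mathbf n$ (using $H_0G_{0,0}=I$ on $\mathcal L^1$ and $\Ker H_0=\mathbb C\mathbf n$ for the Dirichlet realization) and killing $c\,\mathbf n$ through $\langle \mathbf n,v\phi\rangle=0$; the paper instead proves it by antisymmetric extension to $\mathbb Z$.

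Third, the formulas \eqref{ex2-G0} and \eqref{ex2-G1} --- half the content of the theorem --- are not actually derived; you only anticipate ``organizing the many cross terms.'' In the paper this is the bulk of the proof and needs, beyond $M_1Q=QM_1=0$: the extra vanishing $QM_3Q=0$ (your list of vanishing sources $M_1Q$, $QM_1$, $G_{0,1}vQ$ misses this term, which already enters $G_{-1}$ through the subleading part of $m(\kappa)^\dagger$, namely $-m_1^\dagger\,QM_3Q\,m_1^\dagger$; it vanishes because the kernel of $G_{0,3}$ is built from $|\mathbf n\rangle$ and $\langle\mathbf n|$, again via $Qv^*\mathbf n=0$), the rewriting $m_1^\dagger=-Uv^*P_0vU$ obtained from $G_{-2}=P_0$, and the identities $VP_0=-H_0P_0$, $P_0V=-P_0H_0$ combined with \eqref{18082023} to collapse the cross terms into the $(I-P_0)$-sandwiched expressions. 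As written, the proposal establishes only \eqref{ex2-G-2}, and \eqref{ex2-G-1} modulo the missing $QM_3Q=0$; the remaining coefficients and the stated mapping properties of the $G_j$ still require the detailed computation carried out in the paper.
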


\begin{theorem}\label{thm-ex3}
Assume that the threshold $0$ is an exceptional point of the \emph{third
kind},
and that Assumption~\ref{assumV} is fulfilled for some integer $\beta\ge 4$.
Then
\begin{equation}\label{expand-third}
R(\kappa)=\sum_{j=-2}^{\beta-6}\kappa^jG_j+\mathcal{O}(\kappa^{\beta-5})
\quad
\text{in }\mathcal B^{\beta-2}
\end{equation}
with $G_j\in\mathcal B^{j+3}$ for $j$ even,
and $G_j\in\mathcal B^{j+2}$ for $j$ odd.
The coefficients $G_j$ can be computed explicitly.
The first two coefficients can be expressed as 
\begin{align*}
G_{-2}&=P_0,%\label{G-2}
\\
G_{-1}&=\ket{\Psi_c}\bra{\Psi_c}%\label{G-1}
,
\end{align*}
where $P_0$ is the projection onto $\mathsf E$,
and $\Psi_c\in\mathcal E$
is the canonical resonance function.
\end{theorem}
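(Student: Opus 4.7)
The plan is to follow the Jensen--Nenciu inversion scheme used in the proofs of Theorems~\ref{thm-reg}--\ref{thm-ex2}. Starting from the second resolvent equation together with the factorization $V = vUv^*$, one has
\[
R(\kappa) = R_0(\kappa) - R_0(\kappa)\, v\, M(\kappa)^{-1}\, v^* R_0(\kappa),
\qquad M(\kappa) = U + v^* R_0(\kappa) v,
\]
so it suffices to expand $M(\kappa)^{-1}$ in $\mathcal{B}(\mathcal{K})$. Inserting Proposition~\ref{prop12} yields $M(\kappa) = M_0 + \kappa M_1 + \kappa^2 M_2 + \cdots$ with $M_0 = U + v^*G_{0,0}v$ and, crucially, $M_1 = v^* G_{0,1} v = -|v^*\mathbf{n}\rangle\langle v^*\mathbf{n}|$.

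First I would apply the inversion formula from Appendix~\ref{1608211941} with $Q_0$ the orthogonal projection in $\mathcal{K}$ onto $\ker M_0$. This reduces the analysis of $M(\kappa)^{-1}$ to that of a self-adjoint operator $m(\kappa)$ on $Q_0\mathcal{K}$ with expansion $\kappa m_1 + \kappa^2 m_2 + \cdots$, where $m_1 = Q_0 |v^*\mathbf{n}\rangle\langle v^*\mathbf{n}| Q_0$. The key algebraic input from Section~\ref{12.12.19.2.5} is the natural bijection $\Psi \mapsto Uv^*\Psi$ from $\widetilde{\mathcal{E}} = \mathcal{E}$ onto $\ker M_0$; under this bijection $\mathsf{E}$ maps to the hyperplane $Q_0'\mathcal{K} := \{\phi\in Q_0\mathcal{K} : \langle v^*\mathbf{n},\phi\rangle = 0\}$, while the canonical resonance $\Psi_c$ maps to a generator of the orthogonal line $Q_0''\mathcal{K} = \mathrm{span}(Q_0 v^*\mathbf{n})$. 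In the third-kind case both $Q_0'$ and $Q_0''$ are nontrivial.

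Next, the second iteration of the inversion formula splits according to $Q_0 = Q_0' + Q_0''$. On $Q_0''\mathcal{K}$ the restriction of $m_1$ is a nonzero positive scalar, and $m(\kappa)^{-1}$ contributes a $\kappa^{-1}$ singular term there. On $Q_0'\mathcal{K}$, $m_1$ vanishes, forcing a further iteration whose leading coefficient is the restriction of $m_2$ to $Q_0'\mathcal{K}$; this restriction is invertible because \eqref{13.3.7.13.48} rules out further generalized eigenfunctions, and it contributes a $\kappa^{-2}$ singular term. Substituting back and using the inverse map $\phi \mapsto -G_{0,0} v\phi$, which carries $Q_0'\mathcal{K}$ onto $\mathsf{E}$ and the correctly normalized generator of $Q_0''\mathcal{K}$ onto $\Psi_c$ via the identity $\langle V\mathbf{n},\Psi_c\rangle = -1$ from Proposition~\ref{13.1.16.2.51}, the two singular contributions to $R(\kappa)$ become exactly $\kappa^{-2} P_0$ and $\kappa^{-1}|\Psi_c\rangle\langle\Psi_c|$. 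The bound $\mathcal{O}(\kappa^{\beta-5})$ in $\mathcal{B}^{\beta-2}$ and the remaining regular coefficients follow from the same Neumann remainder analysis as in the proofs of Theorems~\ref{thm-ex1}--\ref{thm-ex2}, based on the expansion of $R_0(\kappa)$ to order $\beta-2$.

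The hard part will be the algebraic bookkeeping across the two iterations: one must show that the off-diagonal contributions $Q_0' m_j Q_0''$ and $Q_0'' m_j Q_0'$ feed only into the regular part of $M(\kappa)^{-1}$ and do not pollute either singular coefficient, and that the sandwich of the abstract rank projections on $Q_0'\mathcal{K}$ and $Q_0''\mathcal{K}$ by $-G_{0,0}v(\cdot)v^*G_{0,0}$ collapses precisely to $P_0$ and $|\Psi_c\rangle\langle\Psi_c|$. The decisive fact which decouples the singular terms is the orthogonality $\langle \Psi,\Psi_c\rangle = 0$ for $\Psi\in\mathsf{E}$ built into the definition of the canonical resonance function; it translates under the bijection above into the orthogonality of $Q_0'\mathcal{K}$ and $Q_0''\mathcal{K}$ in $\mathcal{K}$, exactly as used in the analyses of Theorems~\ref{thm-ex1} and~\ref{thm-ex2}.
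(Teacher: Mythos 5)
Your overall scaffolding is the paper's: expand $M(\kappa)=U+v^*R_0(\kappa)v$, apply the inversion formula of Appendix~\ref{1608211941} with $Q$ projecting onto $\Ker M_0$, note $m_0=QM_1Q=-\ket{Qv^*\mathbf n}\bra{Qv^*\mathbf n}$ is nonzero but singular, and apply the formula a second time on $\Ker m_0$. But the mechanism you propose for extracting $G_{-1}$ contains a genuine error. The bijection $\Psi\mapsto Uv^*\Psi$ between $\mathcal E$ and $\Ker M_0$ is \emph{not} unitary (its inverse is $\Phi\mapsto -G_{0,0}v\Phi$), so the $\mathcal H$-orthogonality $\langle\Psi,\Psi_c\rangle=0$ for $\Psi\in\mathsf E$ does \emph{not} translate into $\mathcal K$-orthogonality of $Uv^*\Psi_c$ to $T\mathcal K=\Ker m_0$. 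Concretely, the paper's computation gives
\begin{align*}
G_{-1}=-G_{0,0}v\,(I-q_0^\dagger M_2)\,m_0^\dagger\,(I-M_2q_0^\dagger)\,v^*G_{0,0},
\qquad q_0=Tv^*G_{0,2}vT,
\end{align*}
so the off-diagonal blocks $TM_2$ between $\Ker m_0$ and $\mathrm{span}(Qv^*\mathbf n)$ enter the $\kappa^{-1}$ coefficient; they are exactly what corrects $G_{0,0}vQv^*\mathbf n$ into the canonical resonance function $\Psi_c=\|Qv^*\mathbf n\|^{-2}G_{0,0}v(I-q_0^\dagger v^*G_{0,2}v)Qv^*\mathbf n$, whose preimage $Uv^*\Psi_c$ has a nonzero component in $T\mathcal K$ in general. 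Your two claims --- that $\Psi_c$ corresponds to a generator of $\mathrm{span}(Qv^*\mathbf n)$ and that the off-diagonal contributions ``feed only into the regular part'' --- are therefore false in general; if you carried out the computation under those assumptions you would obtain a rank-one operator built from $G_{0,0}vQv^*\mathbf n$, which is not orthogonal to $\mathsf E$ and hence not $\ket{\Psi_c}\bra{\Psi_c}$. (The block-diagonal picture is fine for $G_{-2}$, which is a pure $T$-block object, but not for $G_{-1}$.)

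Two smaller points. First, your justification that the leading coefficient of the second reduced operator (the paper's $q_0=Tm_1T=Tv^*G_{0,2}vT$) is invertible ``because \eqref{13.3.7.13.48} rules out further generalized eigenfunctions'' is not an argument; the paper excludes a singularity of order $\kappa^{-j}$, $j\ge3$, by self-adjointness of $H$, and alternatively one can get strict negativity of $q_0$ on $T\mathcal K$ from Lemma~\ref{lemma24} together with injectivity of $\Phi\mapsto G_{0,0}v\Phi$ on $\Ker M_0$. Second, you dropped the sign of $M_1=-\ket{v^*\mathbf n}\bra{v^*\mathbf n}$ when passing to the reduced operator; this is harmless for invertibility but the signs must be tracked to land on $+\ket{\Psi_c}\bra{\Psi_c}$. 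Identifying $G_{-2}$ with $P_0$ also still requires the quadratic-form identity of Lemma~\ref{lemma24} (as in the proof of Theorem~\ref{thm-ex2}), which your sketch invokes only implicitly.
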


By Theorems~\ref{thm-reg}--\ref{thm-ex3}, if $\beta\ge 4$, 
the resolvent $R(\kappa)$ always has an expansion of some order,
and its threshold type can be determined by the coefficients $G_{-2}$ and $G_{-1}$.
We also state as a corollary certain identities satisfied by 
the coefficients. 

\begin{corollary}
The coefficients $G_j$ 
from Theorems~\ref{thm-reg}--\ref{thm-ex3} 
satisfy
\begin{align*}
HG_j&=G_jH=0\quad\text{for }j=-2,-1,\\
HG_0&=G_0H=I-P_0,\\
HG_j&=G_jH=-G_{j-2}\quad\text{for }j\ge 1,
\end{align*}
where $P_0$ is the projection onto $\mathsf E$.
\end{corollary}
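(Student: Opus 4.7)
The plan is to read the identities off the resolvent identity
\begin{equation*}
(H+\kappa^2)R(\kappa)=R(\kappa)(H+\kappa^2)=I,
\end{equation*}
valid on $\mathcal{H}$ for $\mathop{\mathrm{Re}}\kappa>0$ since $z=-\kappa^2$. I would substitute the relevant expansion from Theorems~\ref{thm-reg}--\ref{thm-ex3}, apply $H+\kappa^2$ term-by-term, and match powers of $\kappa$. The term-by-term application is legitimate because by \eqref{D-formula} the free Laplacian maps $(\mathcal{L}^s)^*$ into itself and by Assumption~\ref{assumV} the perturbation satisfies $V\in\mathcal{B}((\mathcal{L}^\beta)^*,\mathcal{L}^\beta)$, so $H+\kappa^2$ is a bounded operator between the weighted spaces that appear, with norm uniform as $\kappa\to 0$. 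Consequently, composing an expansion whose remainder is $\mathcal{O}(\kappa^{N+1})$ in $\mathcal{B}^s$ with $H+\kappa^2$ yields a remainder of the same order in some possibly larger $\mathcal{B}^{s'}$.

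Fix a threshold type and write
\begin{equation*}
R(\kappa)=\sum_{j=j_0}^{N}\kappa^jG_j+\mathcal{O}(\kappa^{N+1}),\qquad j_0\in\{0,-1,-2\},
\end{equation*}
with the convention $G_j=0$ for $j<j_0$. Applying $H+\kappa^2$ on the left and re-indexing the $\kappa^2G_j$ sum gives
\begin{equation*}
I=\sum_{j=j_0}^{N}\kappa^j\bigl(HG_j+G_{j-2}\bigr)+\mathcal{O}(\kappa^{N+1}),
\end{equation*}
and uniqueness of asymptotic coefficients yields $HG_j+G_{j-2}=\delta_{j,0}I$ for $j_0\le j\le N$. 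In each of the four theorems, $G_{-2}$ is either absent (regular and first-kind cases, where also $P_0=0$) or equals $P_0$ (second- and third-kind cases), so these relations combine into the stated identities $HG_{-2}=HG_{-1}=0$, $HG_0=I-P_0$, and $HG_j=-G_{j-2}$ for $j\ge1$.

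The companion identities $G_jH=\cdots$ are obtained identically from $R(\kappa)(H+\kappa^2)=I$; equivalently, since $-\kappa^2\notin\sigma(H)$ for real $\kappa>0$ the operator $R(\kappa)$ is self-adjoint on $\mathcal{H}$, each $G_j$ is symmetric as a sesquilinear form on $\mathcal{L}^s\times\mathcal{L}^s$, and the right-hand sides $0$, $I-P_0$, $-G_{j-2}$ are also symmetric, so $G_jH$ agrees with the formal adjoint of $HG_j$. The only non-routine point to verify is that the $\mathcal{O}(\kappa^{N+1})$ remainder is preserved under composition with $H+\kappa^2$, which is immediate from the boundedness recorded above; I do not anticipate any real obstacle.
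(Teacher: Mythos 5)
Your proposal is correct and is essentially the paper's own argument: the paper likewise verifies the corollary by inserting the expansions from Theorems~\ref{thm-reg}--\ref{thm-ex3} into the identities $(H+\kappa^2)R(\kappa)\Psi=R(\kappa)(H+\kappa^2)\Psi=\Psi$ and matching powers of $\kappa$, with the mapping properties of $H_0$ and $V$ on the weighted spaces justifying the term-by-term application (the paper states the identities for rapidly decreasing $\Psi$, which suffices by density). Your bookkeeping of the cases $G_{-2}=P_0$ versus $P_0=0$ and the self-adjointness remark for the companion identities $G_jH$ are consistent with the paper.
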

\begin{proof}
The assertion is verified by Theorems~\ref{thm-reg}--\ref{thm-ex3} and 
the identities
\begin{align*}
(H+\kappa^2)R(\kappa)\Psi=R(\kappa)(H+\kappa^2)\Psi=\Psi\quad 
\end{align*}
for any rapidly decreasing function $\Psi\colon\mathbb N\to\mathbb C$ and 
any $\kappa\sim 0$ with $\mathop{\mathrm{Re}}\kappa>0$.
\end{proof}

We shall prove Theorems~\ref{thm-reg}--\ref{thm-ex3} 
following the procedure given in Ito-Jensen\cite{IJ}. 
The proofs will be given 
in Sections~\ref{1608241726}--\ref{1608241729}
with preliminaries in the preceding sections.

\section{General boundary conditions}\label{16081823}

In this section we comment on discrete analogues 
of general boundary conditions at the origin of the half-line,
such as the Neumann and the Robin conditions. 
In particular, we introduce specific potentials that allows us to 
deal with such a general boundary condition as 
a perturbation of the Dirichlet condition.

On the discrete half-line 
a boundary condition is realized simply 
by assigning a value to $x[0]$ for each function $x\colon \mathbb N\to\mathbb C$,
as in \eqref{160820}. 
The natural realization of the Neumann boundary condition is 
to assign the difference there to be $0$, i.e.,
\begin{align*}
x[1]-x[0]=0\quad \text{or}\quad x[0]=x[1].
%\label{1608214}
\end{align*}
Similarly, a more general Robin condition is realized by setting
\begin{align*}
ax[0]+b(x[0]-x[1])=0;\quad (a,b)\neq (0,0).
\end{align*}
Here we may take $a\neq -b$. Otherwise 
it reduces to the \textit{shifted} Dirichlet condition $x[1]=0$.

Let us remark that there is yet another realization 
of the Dirichlet boundary condition:
\begin{align}
x[0]=-x[1], \label{1608218}
\end{align}
which models functions vanishing at $n=1/2$.
In other words, \eqref{1608218} may be understood as 
arising from sampling a continuous function $f$ at the points $n+1/2$:
$x[n]=f(n+1/2)$.
In such a model the Neumann condition is given by 
\begin{align*}
x[1]-x[0]=0,
\end{align*}
and the Robin condition by 
\begin{align*}
a(x[0]+x[1])/2+b(x[1]-x[0])=0;\quad (a,b)\neq (0,0).
\end{align*}

In any case all the above boundary conditions are unified as
\begin{align*}
x[0]=\alpha x[1];\quad \alpha\in\mathbb R.
\end{align*}
Denote the corresponding Laplacian by $H_\alpha$, i.e., 
for any sequence $x\colon\mathbb N\to\mathbb C$ 
\begin{equation}\label{N-formula}
(H_\alpha x)[n]=\begin{cases}
(2-\alpha)x[1]-x[2]& \text{for }n=1,\\
2x[n]-x[n+1]-x[n-1] & \text{for }n\geq2.
\end{cases}
\end{equation}
We note that the operator $H_\alpha$ is in fact 
bounded and self-adjoint on $\mathcal H=\ell^2(\mathbb N)$.

Let $e_1=(1,0,0,\ldots)$ be the first canonical basis vector 
and define the potential
\begin{equation}
V_\alpha=-\alpha\ket{e_1}\bra{e_1}.
\end{equation}
Then, comparing definitions \eqref{D-formula} and \eqref{N-formula}, we see that
\begin{equation}
H_\alpha=H_0+V_\alpha.
\end{equation}
The potential $V_\alpha$ satisfies Assumption~\ref{assumV} with $\mathcal{K}=\mathbb{C}$ and
\begin{align}
v=\sqrt{|\alpha|}\ket{e_1},\quad
v^{\ast}=\sqrt{|\alpha|}\bra{e_1},\quad
U=-\mathop{\mathrm{sgn}}\alpha.
\label{16082115}
\end{align}
Actually $V_\alpha$ is a multiplication operator.
We can directly compute 
\begin{align*}
\widetilde{\mathcal E}
=\mathbb C((1-\alpha)\mathbf n+\alpha\mathbf 1),\quad
\mathsf E=\{0\}.
\end{align*}
Note that these eigenspaces can also be computed by 
applying the results of Section~\ref{12.12.19.2.5} to \eqref{16082115}.
The above description of the eigenspaces implies the following:

\begin{lemma}\label{1608211512}
The threshold $0$ for the operator $H_\alpha$ is 
\begin{enumerate}
\item
a regular point if $\alpha\neq 1$;
\item
an exceptional point of the first kind if $\alpha=1$.
\end{enumerate}
\end{lemma}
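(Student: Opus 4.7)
The plan is to read off the threshold type directly from Definition~\ref{def-reg-excp} using the explicit descriptions
\[
\widetilde{\mathcal E}=\mathbb C\bigl((1-\alpha)\mathbf n+\alpha\mathbf 1\bigr),\qquad \mathsf E=\{0\}
\]
already provided just before the lemma. Since $\mathsf E=\{0\}$ unconditionally, the only options are a regular point or an exceptional point of the first kind, and the dichotomy is governed entirely by whether $\mathcal E=\{0\}$ or $\mathcal E\neq\{0\}$. So the whole argument reduces to computing $\mathcal E=\widetilde{\mathcal E}\cap(\mathbb C\mathbf 1\oplus\mathcal L^{\beta-2})$.

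First I would observe that a nonzero element of $\widetilde{\mathcal E}$ has the form $c\bigl((1-\alpha)\mathbf n+\alpha\mathbf 1\bigr)$ for some $c\neq 0$, and that $\mathbf n\notin\mathbb C\mathbf 1\oplus\mathcal L^{\beta-2}$: indeed, $\mathbf n$ grows linearly in $m$, whereas any element of $\mathbb C\mathbf 1\oplus\mathcal L^{\beta-2}$ is bounded (noting $\beta\ge 2$ so that $\mathcal L^{\beta-2}\subset\ell^\infty$). Consequently $c(1-\alpha)\mathbf n+c\alpha\mathbf 1$ lies in $\mathbb C\mathbf 1\oplus\mathcal L^{\beta-2}$ if and only if its $\mathbf n$-coefficient vanishes, i.e.\ $c(1-\alpha)=0$.

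From this I would conclude:
\begin{itemize}
\item If $\alpha\neq 1$, then any nonzero element of $\widetilde{\mathcal E}$ has a nonzero $\mathbf n$-component, so $\mathcal E=\{0\}=\mathsf E$, and the threshold is regular.
\item If $\alpha=1$, then $\widetilde{\mathcal E}=\mathbb C\mathbf 1\subset\mathbb C\mathbf 1\oplus\mathcal L^{\beta-2}$, hence $\mathcal E=\mathbb C\mathbf 1\supsetneq\{0\}=\mathsf E$, so the threshold is exceptional of the first kind.
\end{itemize}

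There is essentially no obstacle here once the descriptions of $\widetilde{\mathcal E}$ and $\mathsf E$ are accepted; the only minor point requiring care is the asymptotic disjointness of $\mathbf n$ from $\mathbb C\mathbf 1\oplus\mathcal L^{\beta-2}$, which one might instead justify via the direct-sum decomposition \eqref{13.3.7.13.48} of $\widetilde{\mathcal E}$ rather than a growth-rate argument, to keep the logic self-contained. Either way the classification in Definition~\ref{def-reg-excp} then yields the claim.
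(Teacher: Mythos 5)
Your proposal is correct and follows essentially the same route as the paper: the paper also deduces the lemma directly from the explicit descriptions $\widetilde{\mathcal E}=\mathbb C\bigl((1-\alpha)\mathbf n+\alpha\mathbf 1\bigr)$ and $\mathsf E=\{0\}$ together with Definition~\ref{def-reg-excp}, leaving the step you spell out (the generator lies in $\mathbb C\mathbf 1\oplus\mathcal L^{\beta-2}$ iff its $\mathbf n$-component vanishes, i.e.\ $\alpha=1$) implicit. Your justification of that step via the directness of the decomposition \eqref{13.3.7.13.48} (or the growth argument) is exactly the missing detail, so nothing further is needed.
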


We can construct the Fourier transform associated with $H_\alpha$,
and compute its expansion coefficients explicitly, which of course 
coincide with those computed from 
Theorems~\ref{thm-reg}--\ref{thm-ex3} and Lemma~\ref{1608211512}.
We remark that we may choose the Neumann Laplacian as the free operator, 
instead of the Dirichlet Laplacian, and formulate our main results 
for its perturbations. 
However, then the proofs get much 
more complicated,  
since its threshold $0$ is an exceptional point of the first kind,
which otherwise is regular.

\section{Generalized eigenspaces}\label{12.12.19.2.5}

In this section we write down the eigenspaces using subspaces of $\mathcal K$,
and then derive some useful properties.
In particular, we reveal the relation between invertibility of 
\textit{intermediate operators} and threshold types.
Compared with the full line discussed in Ito-Jensen\cite{IJ},
the half-line has a very clear correspondence between them,
and the threshold structure is much simpler. 
This is because the free resolvent on the half-line does not have a singular term,
and hence that of the perturbed resolvent comes only 
and directly from those intermediate operators.

To state the main results of this section let us introduce some notation. 
Let 
\begin{align}
M_0=U+v^*G_{0,0}v,\quad M_1=v^*G_{0,1}v
=-|v^*\mathbf n\rangle\langle v^*\mathbf n|,
\label{160817218}
\end{align}
and $Q,S\in\mathcal B(\mathcal K)$ be 
the orthogonal projections onto 
$\mathop{\mathrm{Ker}}M_0,\mathop{\mathrm{Ker}}M_1$, respectively.
Then we set 
\begin{align}
m_0=QM_1Q=-\ket{Qv^*\mathbf{n}}\bra{Qv^*\mathbf{n}}.
\label{160817219}
\end{align}
The operators $M_0$ and $m_0$ are, so to say, the 
\textit{intermediate operators} 
in the terminology of Ito-Jensen\cite{IJ} for the half-line case. 
They actually appear as expansion coefficients of certain operators in the 
later sections, but at least here we can define them independently of 
these expansions. They are well-defined 
for any $\beta\ge 1$ in Assumption~\ref{assumV}.
In addition, we also define the operators $w\in\mathcal B((\mathcal L^\beta)^*,\mathcal K)$
and $z\in \mathcal B(\mathcal K, \mathcal L^*)$ by 
\begin{align}
w=Uv^*,\quad 
z
=
\|v^*\mathbf n\|^{\dagger 2}
\langle M_0v^*\mathbf n,{}\cdot{}\rangle \mathbf n 
-G_{0,0}v,
\label{12.12.27.16.4}
\end{align}
where $a^{\dagger}$ denotes 
the pseudo-inverse of $a\in\mathbb C$, see \eqref{13.3.5.22.28}.

\begin{proposition}\label{13.1.16.2.51}
Suppose that $\beta\ge 1$ in Assumption~\ref{assumV}.
Then the eigenspaces are expressed as 
\begin{align}
\widetilde{\mathcal E}&=
z(\mathop{\mathrm{Ker}} S M_0)\oplus
\bigl(\mathbb C\mathbf n \cap \mathop{\mathrm{Ker}} v^*\bigr),
\label{12.12.19.6.56}\\
\mathcal E
&=
z(\mathop{\mathrm{Ker}}M_0)
,
\label{12.12.19.6.57}\\
\mathsf E&
=z(\mathop{\mathrm{Ker}}M_0\cap\mathop{\mathrm{Ker}} M_1)
=z(\mathop{\mathrm{Ker}}M_0\cap\mathop{\mathrm{Ker}} m_0).
\label{12.12.19.6.58}
\end{align}
In particular, the generalized eigenfunctions 
have the special asymptotics \eqref{13.3.7.13.48},
and, also, a function $\Psi\in\mathcal E$ has the
asymptotics $\Psi-\mathbf 1\in \mathcal L^{\beta-2}$
if and only if $\langle V\mathbf n,\Psi\rangle=-1$.
\end{proposition}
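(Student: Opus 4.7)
The plan is to convert the equation $H\Psi=0$ into a finite-dimensional problem on $\mathcal{K}$ via the free resolvent coefficient $G_{0,0}$, and then read off the asymptotic structure of $\Psi$ from the explicit kernel $G_{0,0}[n,m]=n\wedge m$. For $\Psi\in(\mathcal{L}^\beta)^*$ satisfying $H\Psi=0$ one has $-V\Psi=-vUv^*\Psi\in\mathcal{L}^\beta\subset\mathcal{L}^1$, and since $G_{0,0}$ is a right inverse of $H_0$ on $\mathcal{L}^1$ by Proposition~\ref{prop12} and the kernel of $H_0$ on any weighted sequence space under the Dirichlet convention is exactly $\mathbb{C}\mathbf{n}$ (by directly solving the recursion $2x[n]=x[n+1]+x[n-1]$), we get $\Psi=-G_{0,0}v\xi+c\mathbf{n}$ with $\xi=Uv^*\Psi=w\Psi$ and some $c\in\mathbb{C}$. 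Applying $v^*$ and using $U^2=I$ this is equivalent to the finite-dimensional equation $M_0\xi=cv^*\mathbf{n}$ in $\mathcal{K}$.

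Next I would extract the asymptotics of $\Psi$ by rewriting, for $f=v\xi\in\mathcal{L}^\beta$,
\begin{equation*}
(G_{0,0}f)[n]=\sum_m(n\wedge m)f[m]=\langle\mathbf{n},f\rangle-\sum_{m>n}(m-n)f[m],
\end{equation*}
and bounding the remainder in $\mathcal{L}^{\beta-2}$ via a Fubini estimate of the form $\sum_m|f[m]|\sum_{n<m}(1+n^2)^{(\beta-2)/2}(m-n)\le C\sum_m m^\beta|f[m]|$. This yields
\begin{equation*}
\Psi=c\mathbf{n}-\langle v^*\mathbf{n},\xi\rangle\mathbf{1}+r,\qquad r\in\mathcal{L}^{\beta-2},
\end{equation*}
which is \eqref{13.3.7.13.48}. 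The three expressions \eqref{12.12.19.6.56}--\eqref{12.12.19.6.58} then follow by imposing the vanishing of successive coefficients. When $v^*\mathbf{n}\neq0$ the equation $M_0\xi=cv^*\mathbf{n}$ is equivalent to $SM_0\xi=0$ with $c$ forced to equal $\langle M_0v^*\mathbf{n},\xi\rangle/\|v^*\mathbf{n}\|^2$, matching the definition of $z$; when $v^*\mathbf{n}=0$ the $c\mathbf{n}$ term is freely chosen and is absorbed into the direct summand $\mathbb{C}\mathbf{n}\cap\mathop{\mathrm{Ker}}v^*$. Demanding that the $\mathbf{n}$-component vanishes forces $c=0$, i.e.\ $\xi\in\mathop{\mathrm{Ker}}M_0$; demanding also that the $\mathbf{1}$-component vanishes gives $\langle v^*\mathbf{n},\xi\rangle=0$, i.e.\ $\xi\in\mathop{\mathrm{Ker}}M_1$.

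For the second form of $\mathsf{E}$, for $\xi\in\mathop{\mathrm{Ker}}M_0$ one has $Q\xi=\xi$ and $m_0\xi=-\langle v^*\mathbf{n},\xi\rangle Qv^*\mathbf{n}$, so $m_0\xi=0$ forces either $\langle v^*\mathbf{n},\xi\rangle=0$ or $Qv^*\mathbf{n}=0$; but in the latter case every $\xi\in\mathop{\mathrm{Range}}Q=\mathop{\mathrm{Ker}}M_0$ automatically satisfies $\langle v^*\mathbf{n},\xi\rangle=\langle Qv^*\mathbf{n},\xi\rangle=0$, so on $\mathop{\mathrm{Ker}}M_0$ the conditions $m_0\xi=0$ and $M_1\xi=0$ coincide. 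For the last assertion, unitarity of $U$ gives $\langle V\mathbf{n},\Psi\rangle=\langle Uv^*\mathbf{n},U\xi\rangle=\langle v^*\mathbf{n},\xi\rangle$, so by the expansion above $\Psi-\mathbf{1}\in\mathcal{L}^{\beta-2}$ iff $\langle v^*\mathbf{n},\xi\rangle=-1$ iff $\langle V\mathbf{n},\Psi\rangle=-1$. The main obstacle I expect is managing the dichotomy $v^*\mathbf{n}=0$ versus $v^*\mathbf{n}\neq0$ cleanly, since it is exactly this case distinction that is absorbed into the pseudo-inverse $\|v^*\mathbf{n}\|^{\dagger2}$ in the definition of $z$; one must also verify injectivity of $z$ (by applying $H_0$ to $z\xi$ and invoking injectivity of $v$) so that the direct sum decomposition of $\widetilde{\mathcal{E}}$ is well-posed.
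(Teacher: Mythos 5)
Your proposal is correct and takes essentially the same route as the paper: you reduce $H\Psi=0$ via $G_{0,0}$ and $\mathop{\mathrm{Ker}}H_0=\mathbb C\mathbf n$ to the finite-dimensional equation $M_0\xi=cv^*\mathbf n$ with $\xi=w\Psi$, and read off the $\mathbf n$- and $\mathbf 1$-components from the kernel expansion of $G_{0,0}$, which is precisely the content of Lemmas~\ref{12.11.24.18.24}--\ref{12.12.27.16.6} (there packaged as $wz=I$ on $\mathop{\mathrm{Ker}}SM_0$ and the preimage identities for $z$). The only difference is presentational: you parametrize solutions directly by $(\xi,c)$ instead of isolating the mutually inverse maps $w$ and $z$, while the dichotomy $v^*\mathbf n=0$ versus $v^*\mathbf n\neq 0$, the identification of $m_0$-vanishing with $M_1$-vanishing on $\mathop{\mathrm{Ker}}M_0$, and the computation $\langle V\mathbf n,\Psi\rangle=\langle v^*\mathbf n,\xi\rangle$ all match the paper's argument.
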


\begin{corollary}\label{160616438}
Suppose that $\beta\ge 1$ in Assumption~\ref{assumV}.
\begin{enumerate}
\item\label{12.12.19.6.30}
The threshold $0$ is a regular point if and only if 
$M_0$ is invertible in $\mathcal B(\mathcal K)$.
In addition, if the threshold $0$ is a regular point, 
\begin{align*}
\dim (\widetilde{\mathcal E}/\mathcal E)=1,\quad 
\dim(\mathcal E/\mathsf E)=
\dim\mathsf E=0
.
\end{align*}

\item\label{12.12.19.6.31}
The threshold $0$ is an exceptional point of the first kind 
if and only if 
$M_0$ is not invertible in $\mathcal B(\mathcal K)$
and $m_0$ is invertible in $\mathcal B(Q\mathcal K)$.
In addition, if the threshold $0$ is an exceptional point of the first kind, 
\begin{align*}
\dim (\widetilde{\mathcal E}/\mathcal E)=0,\quad 
\dim(\mathcal E/\mathsf E)=1,\quad
\dim\mathsf E=0
.
\end{align*}

\item\label{12.12.19.6.32}
The threshold $0$ is an exceptional point of the second kind 
if and only if 
$M_0$ is not invertible in $\mathcal B(\mathcal K)$
and $m_0=0$.
In addition, if the threshold $0$ is an exceptional point of the second kind, 
\begin{align*}
\dim (\widetilde{\mathcal E}/\mathcal E)=1,\quad 
\dim(\mathcal E/\mathsf E)=0,\quad
1\le\dim\mathsf E<\infty
.
\end{align*}

\item\label{12.12.19.6.33}
The threshold $0$ is an exceptional point of the third kind 
if and only if 
$M_0$ and $m_0$ are not invertible in $\mathcal B(\mathcal K)$ and 
$\mathcal B(Q\mathcal K)$, respectively, and $m_0\neq 0$.
In addition, if the threshold $0$ is an exceptional point of the third kind, 
\begin{align*}
\dim (\widetilde{\mathcal E}/\mathcal E)=0,\quad 
\dim(\mathcal E/\mathsf E)=1,\quad
1\le\dim\mathsf E<\infty
.
\end{align*}
\end{enumerate}
\end{corollary}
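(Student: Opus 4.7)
The plan is to read off every assertion from Proposition~\ref{13.1.16.2.51}, reducing the claims to finite-dimensional linear algebra on $\mathcal K$ involving $M_0$ and $m_0$.

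First I would set up a Fredholm framework for $M_0$. Since $v\in\mathcal C(\mathcal K,\mathcal L^1)$ and $G_{0,0}\in\mathcal B^1$, the operator $v^*G_{0,0}v$ is compact on $\mathcal K$, so $M_0=U+v^*G_{0,0}v$ is a self-adjoint compact perturbation of the unitary $U$, hence Fredholm of index zero. Consequently $M_0$ is invertible iff it is injective, $\dim\mathop{\mathrm{Ker}} M_0<\infty$, and $\mathrm{Range}\,M_0=(\mathop{\mathrm{Ker}} M_0)^\perp$. This immediately yields the finiteness $\dim\mathsf E<\infty$.

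Next, using Proposition~\ref{13.1.16.2.51} together with the injectivity of $z$ on the subspaces appearing there, I read off
\begin{align*}
\dim\mathcal E&=\dim\mathop{\mathrm{Ker}} M_0,\quad \dim\mathsf E=\dim(\mathop{\mathrm{Ker}} M_0\cap\mathop{\mathrm{Ker}} m_0),\\
\dim(\mathcal E/\mathsf E)&=\mathrm{rank}(m_0|_{Q\mathcal K})\in\{0,1\},
\end{align*}
the last bound holding because $m_0=-|Qv^*\mathbf n\rangle\langle Qv^*\mathbf n|$ has rank at most one. In particular, $m_0$ is invertible on $Q\mathcal K$ exactly when $m_0\neq 0$ and $\dim\mathop{\mathrm{Ker}} M_0=1$.

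For the quotient $\widetilde{\mathcal E}/\mathcal E$ I would split on whether $v^*\mathbf n$ vanishes. If $v^*\mathbf n=0$, then $S=I$, $\mathop{\mathrm{Ker}} SM_0=\mathop{\mathrm{Ker}} M_0$, and $\mathbb C\mathbf n\cap\mathop{\mathrm{Ker}} v^*=\mathbb C\mathbf n$, giving $\dim(\widetilde{\mathcal E}/\mathcal E)=1$. If $v^*\mathbf n\neq 0$, then $\mathop{\mathrm{Ker}} S=\mathbb Cv^*\mathbf n$ and the second summand collapses, and $\dim\mathop{\mathrm{Ker}} SM_0-\dim\mathop{\mathrm{Ker}} M_0$ equals $1$ or $0$ according as $v^*\mathbf n\in\mathrm{Range}\,M_0$ or not. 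Self-adjointness of $M_0$ identifies this last condition with $Qv^*\mathbf n=0$, i.e.\ with $m_0=0$. Hence in all cases $\dim(\widetilde{\mathcal E}/\mathcal E)=1$ iff $m_0=0$, and $=0$ iff $m_0\neq 0$.

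The remainder is a case check. In each of the four cases I plug the hypothesized invertibility conditions on $M_0$ and $m_0$ into the formulas above and verify Definition~\ref{def-reg-excp} together with the stated dimensions; the four hypotheses are mutually exclusive and exhaustive, so the equivalences follow. The main obstacle I foresee is the $\widetilde{\mathcal E}/\mathcal E$ step, specifically the identification $m_0=0\Leftrightarrow v^*\mathbf n\in\mathrm{Range}\,M_0$ via self-adjointness of $M_0$; once that is in hand the rest is bookkeeping with Proposition~\ref{13.1.16.2.51}.
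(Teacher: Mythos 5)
Your proposal is correct and is in substance the same argument as the paper's: both rest on Proposition~\ref{13.1.16.2.51} together with the injectivity of $z$ on $\Ker SM_0$, the compactness of $v^*G_{0,0}v$ (hence the Fredholm/closed-range properties of the self-adjoint $M_0$, giving $\dim\mathsf E<\infty$ and $\mathrm{Ran}\,M_0=(\Ker M_0)^\perp$), and the rank-one structure $m_0=-\ket{Qv^*\mathbf n}\bra{Qv^*\mathbf n}$, including the same split on whether $v^*\mathbf n=0$. Your uniform dimension formulas ($\dim(\mathcal E/\mathsf E)=\mathrm{rank}\,m_0$ and $\dim(\widetilde{\mathcal E}/\mathcal E)=1\Leftrightarrow m_0=0$) merely repackage the paper's case-by-case constructions, so no gap.
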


\begin{corollary}\label{13.3.25.13.0}
Suppose that $\beta\ge 1$ in Assumption~\ref{assumV},
and that $V$ is local.
Then
\begin{align}
\dim \widetilde{\mathcal E}=1,\quad 
\dim\mathsf E=0,
\label{16081618}
\end{align}
i.e., the threshold $0$ is either a regular point or 
an exceptional point of the first kind.
\end{corollary}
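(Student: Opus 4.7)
The plan is to combine an elementary recurrence argument, using locality of $V$, with the dimension identities already established earlier in Section~\ref{12.12.19.2.5}. Writing out $H\Psi=0$ with $V$ acting by multiplication, $(V\Psi)[n]=V[n]\Psi[n]$, and using the explicit form of $H_0$ from \eqref{D-formula}, one obtains $(2+V[1])\Psi[1]-\Psi[2]=0$ and $\Psi[n+1]=(2+V[n])\Psi[n]-\Psi[n-1]$ for $n\ge 2$. This is the standard three-term recurrence with Dirichlet boundary value $\Psi[0]=0$ baked into $H_0$ at $n=1$, so any solution $\Psi\colon\mathbb N\to\mathbb C$ is uniquely determined by the single scalar $\Psi[1]$. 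In particular the space of all such formal solutions is at most one-dimensional, whence $\dim\widetilde{\mathcal E}\le 1$.

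Next, the identities cited right after \eqref{13.3.7.13.48}, namely $\dim(\widetilde{\mathcal E}/\mathcal E)+\dim(\mathcal E/\mathsf E)=1$ and $0\le\dim\mathsf E<\infty$, can be summed through the filtration $\mathsf E\subset\mathcal E\subset\widetilde{\mathcal E}$ to give $\dim\widetilde{\mathcal E}=\dim\mathsf E+1$. Combined with the upper bound just obtained, this forces $\dim\mathsf E=0$ and $\dim\widetilde{\mathcal E}=1$, which is exactly the content of \eqref{16081618}. Referring to Definition~\ref{def-reg-excp}, the condition $\mathsf E=\{0\}$ rules out the exceptional cases of the second and third kinds, leaving only the regular case or the exceptional case of the first kind.

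There is no serious obstacle in this argument; locality collapses the functional equation to a scalar recurrence, and everything else is bookkeeping using results already in hand. The only subtlety worth noting is that the Dirichlet convention $\Psi[0]=0$ is encoded in the special form of $(H_0x)[1]$ in \eqref{D-formula}, and is therefore inherited automatically by $H=H_0+V$ regardless of the precise sequence $V[\cdot]$; once this is acknowledged, the initial datum $\Psi[1]$ alone parametrises every zero-energy solution on $\mathbb N$.
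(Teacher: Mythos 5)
Your argument is correct, but it proceeds in the opposite direction from the paper's. The paper proves the corollary by showing directly that $\mathsf E=\{0\}$: for $\Psi\in\mathsf E$ one has the representation $\Psi=zw\Psi$, which by Lemmas~\ref{12.12.27.16.6} and \ref{12.11.24.18.24} becomes the Volterra-type identity $\Psi[n]=\sum_{m\ge n}(m-n)V[m]\Psi[m]$; the decay of $V$ then gives a smallness estimate forcing $\Psi[n]=0$ for all large $n$, and backward propagation through the difference equation yields $\Psi\equiv 0$. The remaining claim $\dim\widetilde{\mathcal E}=1$ is then bookkeeping via Corollary~\ref{160616438}. You instead do the elementary half on the other end: locality collapses $H\Psi=0$ to a second-order scalar recurrence with the Dirichlet convention built into the $n=1$ equation of \eqref{D-formula}, so the full solution space is parametrised by $\Psi[1]$ and $\dim\widetilde{\mathcal E}\le 1$; then the identity $\dim(\widetilde{\mathcal E}/\mathcal E)+\dim(\mathcal E/\mathsf E)=1$ together with $\dim\mathsf E<\infty$ (established in Corollary~\ref{160616438}, hence available here) gives $\dim\widetilde{\mathcal E}=1+\dim\mathsf E$, forcing $\dim\mathsf E=0$ and $\dim\widetilde{\mathcal E}=1$. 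Both routes use locality crucially and both lean on the structural dimension counts of Section~\ref{12.12.19.2.5} for one half of the statement; yours avoids the asymptotic representation and the quantitative tail estimate on $V$ entirely, which makes it shorter and more elementary, while the paper's version isolates the mechanism (decay at infinity kills $\ell^2$-type zero modes) that also explains why the conclusion fails for genuinely non-local $V$, where no pointwise recurrence is available.
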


In the remainder of this section
we prove Proposition~\ref{13.1.16.2.51},
and Corollaries~\ref{160616438} and \ref{13.3.25.13.0},
using a sequence of lemmas given below.

\begin{lemma}\label{12.11.24.18.24}
For any $x\in\mathcal L^s$, $s\ge 1$,
the sequence $G_{0,0}x\in \mathcal L^*$ is expressed as 
\begin{align}
(G_{0,0}x)[n]
&=
\langle\mathbf n,x\rangle
-\sum_{m=n}^\infty(m-n)x[m]\quad \text{for } n\in\mathbb N.
\label{12.11.24.16.17}
\end{align}
In particular, $G_{0,0}x\in \mathcal L^{s-2}$ 
if and only if $\langle \mathbf n,x\rangle=0$.
\end{lemma}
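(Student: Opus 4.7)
The plan is to derive the formula \eqref{12.11.24.16.17} directly from the kernel $G_{0,0}[n,m]=n\wedge m$ given in \eqref{G00}, and then use it to read off the asymptotic behavior of $G_{0,0}x$. Employing the elementary identity $n\wedge m=m-(m-n)_+$, the defining sum $(G_{0,0}x)[n]=\sum_{m=1}^\infty(n\wedge m)\,x[m]$ splits at once into $\sum_{m=1}^\infty m\,x[m]$ minus $\sum_{m\ge n}(m-n)\,x[m]$, which is the desired formula. Both series converge absolutely because $x\in\mathcal L^s\subset\mathcal L^1$ (as $s\ge 1$), so that $\sum_m m|x[m]|<\infty$.

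For the equivalence, I first observe that by dominated convergence the remainder $\sum_{m\ge n}(m-n)\,x[m]$ vanishes as $n\to\infty$, so $(G_{0,0}x)[n]\to\langle\mathbf n,x\rangle$. The forward implication is then immediate: if $\langle\mathbf n,x\rangle\ne 0$, then $|(G_{0,0}x)[n]|$ is bounded below by a positive constant for large $n$, so the norm $\|G_{0,0}x\|_{1,s-2}=\sum_n(1+n^2)^{(s-2)/2}|(G_{0,0}x)[n]|$ diverges. Equivalently, $\mathbf 1\notin\mathcal L^{s-2}$ whenever $s\ge 1$, since the weight decays no faster than $n^{-1}$.

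For the reverse implication, assume $\langle\mathbf n,x\rangle=0$, bound $|(G_{0,0}x)[n]|\le\sum_{m\ge n}(m-n)|x[m]|$, and interchange sums via Tonelli to obtain
\begin{equation*}
\|G_{0,0}x\|_{1,s-2}\le\sum_{m=1}^\infty|x[m]|\sum_{n=1}^m(1+n^2)^{(s-2)/2}(m-n).
\end{equation*}
Splitting into the two cases $s\ge 2$ (use the monotone bound $(1+n^2)^{(s-2)/2}\le(1+m^2)^{(s-2)/2}$ together with $\sum_{n=1}^m(m-n)\le m^2/2$) and $1<s<2$ (use an integral comparison giving $\sum_{n=1}^m(1+n^2)^{(s-2)/2}\le Cm^{s-1}$), the inner sum is $O((1+m^2)^{s/2})$, which yields $\|G_{0,0}x\|_{1,s-2}\le C\|x\|_{1,s}$.

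The main obstacle is the borderline case $s=1$, where the naive weight estimate produces a spurious logarithmic factor $m\log m$ instead of the required $m$. To handle it I would exploit the cancellation encoded in $\langle\mathbf n,x\rangle=0$ more carefully: an Abel summation rewrites the remainder as a \emph{tail} $-\sum_{j>n}F[j]$ with $F[j]=\sum_{m\ge j}x[m]$, after which $\|G_{0,0}x\|_{1,s-2}$ can be controlled by the $\ell^1$-summability of $F$ (which is itself the content of $\sum_j F[j]=\langle\mathbf n,x\rangle=0$) combined with the faster decay of the tail.
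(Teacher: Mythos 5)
Your derivation of \eqref{12.11.24.16.17} from the kernel $G_{0,0}[n,m]=n\wedge m$ is exactly the paper's argument, and your "only if" direction (the remainder $\sum_{m\ge n}(m-n)x[m]$ tends to $0$, so a nonzero limit $\langle\mathbf n,x\rangle$ is incompatible with $\mathcal L^{s-2}$ because the weights $(1+n^2)^{(s-2)/2}$ are not summable for $s\ge1$) is correct, and in fact slightly more robust than the paper's, which deduces both directions from the single assertion that the remainder lies in $\mathcal L^{s-2}$. For $s>1$ your Tonelli computation proves precisely that assertion, i.e.\ $\sum_n(1+n^2)^{(s-2)/2}\bigl|\sum_{m\ge n}(m-n)x[m]\bigr|\le C\|x\|_{1,s}$, which the paper states without detail; up to that point the two proofs coincide.

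The gap is the borderline case $s=1$, and it cannot be closed along the lines you sketch: the "if" direction of the lemma is actually \emph{false} at $s=1$, so no refinement of the cancellation $\sum_jF[j]=\langle\mathbf n,x\rangle=0$ can produce the missing bound. Your Abel-summation reduction is correct — with $F[j]=\sum_{m\ge j}x[m]$ one has $(G_{0,0}x)[n]=-\sum_{j>n}F[j]$ when $\langle\mathbf n,x\rangle=0$ — but $F\in\ell^1$ together with $\sum_jF[j]=0$ does not make these tails summable against the weight $\sim n^{-1}$. Concretely, take $F[j]=j^{-1}(\log j)^{-2}$ for $j\ge2$, $F[1]=-\sum_{j\ge2}F[j]$, and $x[m]=F[m]-F[m+1]$. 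Then $\sum_m m|x[m]|<\infty$, so $x\in\mathcal L^1$, and $\langle\mathbf n,x\rangle=\sum_jF[j]=0$, yet $(G_{0,0}x)[n]=-\sum_{j>n}F[j]\sim -1/\log n$, whence $\sum_n(1+n^2)^{-1/2}|(G_{0,0}x)[n]|=\infty$ and $G_{0,0}x\notin\mathcal L^{-1}$. So your instinct that $s=1$ is the crux is right (also the "spurious" logarithm in your naive estimate is genuine, not an artifact), but the honest conclusion is that the second assertion of the lemma must be restricted to $s>1$ (or the conclusion weakened at the endpoint), not rescued by a finer argument; note that the paper's own displayed estimate in its proof suffers from the same defect at $s=1$, while your argument settles all $s>1$, which covers the values $s=\beta\ge2$ used in the resolvent expansion theorems.
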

\begin{proof}
By \eqref{G00} we can write  
\begin{align*}
(G_{0,0}x)[n]
&=\sum_{m=1}^{n-1}mx[m]
+\sum_{m=n}^\infty nx[m],
\end{align*}
which immediately implies \eqref{12.11.24.16.17}.
Noting that 
\begin{align*}
\sum_{n=1}^\infty (1+n^2)^{(s-2)/2}
\biggl|\sum_{m=n}^\infty (m-n)x[m]\biggr|
\le C\|x\|_{1,s}<\infty,
\end{align*}
we can deduce that the second term 
on the right-hand side of \eqref{12.11.24.16.17} 
belongs to $\mathcal L^{s-2}$.
Then by the fact that $\mathbf 1\notin \mathcal L^{s-2}$ for $s\ge 1$ 
we can verify the last assertion.
\end{proof}

\begin{lemma}\label{13.1.18.6.0}
The compositions $H_0G_{0,0}$ and $G_{0,0}H_0$,
defined on $\mathcal L^1$ and $\mathbb C\mathbf n\oplus 
\mathbb C\mathbf 1\oplus
\mathcal L^1$, respectively, 
are expressed as 
\begin{align*}
H_0G_{0,0}=I_{\mathcal L^1},\quad
G_{0,0}H_0=\Pi,%\label{13.1.18.20.17}
\end{align*}
where $\Pi\colon 
\mathbb C\mathbf n\oplus 
\mathbb C\mathbf 1\oplus
\mathcal L^1
\to 
\mathbb C\mathbf 1\oplus
\mathcal L^1$ is the projection.
\end{lemma}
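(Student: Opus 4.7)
Both identities are direct kernel manipulations with a little bookkeeping at the Dirichlet boundary. For $H_0G_{0,0}=I_{\mathcal L^1}$, the plan is to exploit the explicit representation in Lemma~\ref{12.11.24.18.24}. Given $x\in\mathcal L^1$, set $y=G_{0,0}x$; formula \eqref{12.11.24.16.17} extends consistently to $n=0$ with the value $y[0]=\langle\mathbf n,x\rangle-\sum_{m\ge 1}m\,x[m]=0$, so that \eqref{D-formula} can be written uniformly as $(H_0y)[n]=2y[n]-y[n+1]-y[n-1]$ for every $n\ge 1$. Telescoping \eqref{12.11.24.16.17} then yields
\[
y[n]-y[n-1]=\sum_{m=n}^{\infty}x[m],\qquad n\ge 1,
\]
and a second discrete difference recovers $(H_0y)[n]=x[n]$, which is the first identity.

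For $G_{0,0}H_0=\Pi$ on $\mathbb C\mathbf n\oplus\mathbb C\mathbf 1\oplus\mathcal L^1$, I would proceed summand by summand. A direct evaluation of \eqref{D-formula} gives $H_0\mathbf n=0$ and $H_0\mathbf 1=e_1$, the latter being nonzero only at $n=1$ because of the Dirichlet boundary term. Since $(G_{0,0}e_1)[n]=n\wedge 1=1$ one obtains $G_{0,0}H_0\mathbf n=0=\Pi\mathbf n$ and $G_{0,0}H_0\mathbf 1=\mathbf 1=\Pi\mathbf 1$. For $x\in\mathcal L^1$ I would start from
\[
(G_{0,0}H_0x)[n]=\sum_{m=1}^{\infty}(n\wedge m)(H_0x)[m]
\]
and transfer $H_0$ onto the kernel by discrete summation by parts: the convention $x[0]=0$ kills the boundary term at $m=0$, decay of $x$ controls the term at infinity, and for fixed $n$ the sequence $a_m=n\wedge m$ satisfies the elementary identity $(H_0a)[m]=\delta_{n,m}$, giving $(G_{0,0}H_0x)[n]=x[n]=(\Pi x)[n]$.

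The only subtle point, and essentially the main thing to watch, is the consistent use of the Dirichlet convention $x[0]=0$: it is precisely what makes $H_0\mathbf 1$ equal to $e_1$ rather than zero in the second identity, and what eliminates unwanted boundary contributions in the Abel summation. Once this convention is in place, both identities reduce to telescoping and index shifts.
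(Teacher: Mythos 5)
Your proposal is correct and follows essentially the same route as the paper: the paper's proof is a ``direct computation employing the expression \eqref{12.11.24.16.17}'' giving $H_0G_{0,0}x=G_{0,0}H_0x=x$ for $x\in\mathcal L^1$, together with the identities $H_0\mathbf n=0$ and $G_{0,0}H_0\mathbf 1=\mathbf 1$, exactly the ingredients you verify. Your telescoping of \eqref{12.11.24.16.17} and the summation-by-parts step with $(H_0(n\wedge\cdot))[m]=\delta_{n,m}$ merely spell out the omitted computations, with the correct attention to the Dirichlet convention $x[0]=0$.
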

\begin{remark}
Lemmas~\ref{12.11.24.18.24} and \ref{13.1.18.6.0} in particular imply that 
for any $s\ge 1$
\begin{align}
\begin{split}
\mathbb C\mathbf 1\oplus
\mathcal L^s
\subset 
G_{0,0}(\mathcal L^s)
\subset 
\mathbb C\mathbf 1\oplus
\mathcal L^{s-2}.
\end{split}\label{13.1.16.1.52}
\end{align}
\end{remark}
\begin{proof}
By direct computation employing the expression \eqref{12.11.24.16.17} 
we can verify that 
for any $x\in\mathcal L^1$
$$H_0G_{0,0}x=G_{0,0}H_0x=x.$$ 
We can also compute 
\begin{align*}
H_0\mathbf n=0,\quad
G_{0,0}H_0\mathbf 1=\mathbf 1.
\end{align*}
Then the assertion follows by the above identities.
\end{proof}

\begin{lemma}\label{12.12.27.16.6}
For any $\Phi\in \mathop{\mathrm{Ker}}S M_0$
and $\Psi\in\widetilde{\mathcal E}$
\begin{align}
wz\Phi
=\Phi,\quad
zw\Psi\in\widetilde{\mathcal E}
.\label{12.11.26.19.10}
\end{align}
In addition, 
\begin{align}
z^{-1}(\widetilde{\mathcal E})&=\mathop{\mathrm{Ker}} S M_0,
&
\widetilde{\mathcal E}\cap 
\mathop{\mathrm{Ker}}w
&=\mathbb C\mathbf n \cap \mathop{\mathrm{Ker}}v^*,
\label{12.11.26.19.0}\\
z^{-1}(\mathcal E)
&=\mathop{\mathrm{Ker}}M_0,&
\mathcal E\cap 
\mathop{\mathrm{Ker}}w
&=\{0\},
\label{12.11.26.19.1}\\
z^{-1}(\mathsf E)
&=\mathop{\mathrm{Ker}}M_0\cap \mathop{\mathrm{Ker}} M_1,&
\mathsf E\cap \mathop{\mathrm{Ker}}w
&=\{0\}.
\label{12.11.27.2.28}
\end{align}
\end{lemma}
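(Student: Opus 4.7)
First, I would compute $v^*z\Phi$ for arbitrary $\Phi\in\mathcal K$. Writing $v^*G_{0,0}v=M_0-U$, one obtains
\begin{equation*}
v^*z\Phi=U\Phi-M_0\Phi+\|v^*\mathbf n\|^{\dagger 2}\langle M_0v^*\mathbf n,\Phi\rangle\,v^*\mathbf n.
\end{equation*}
If $\Phi\in\Ker(SM_0)$, then either $v^*\mathbf n=0$ (so $S=I$ and $M_0\Phi=0$) or $M_0\Phi\in\mathbb C v^*\mathbf n$; in both cases the right-hand side collapses to $U\Phi$, giving $wz\Phi=Uv^*z\Phi=\Phi$, the first half of \eqref{12.11.26.19.10}. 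Lemma~\ref{13.1.18.6.0} next gives $H_0z\Phi=-v\Phi$ (using $H_0G_{0,0}v\Phi=v\Phi$ on $\mathcal L^1$ and $H_0\mathbf n=0$); combined with $Vz\Phi=vUv^*z\Phi$, this yields
\begin{equation*}
Hz\Phi=-vU\bigl(M_0\Phi-\|v^*\mathbf n\|^{\dagger 2}\langle M_0v^*\mathbf n,\Phi\rangle v^*\mathbf n\bigr).
\end{equation*}
Since $vU$ is injective, the same case analysis shows $Hz\Phi=0$ is equivalent to $\Phi\in\Ker(SM_0)$, giving the inclusion $\Ker SM_0\subset z^{-1}(\widetilde{\mathcal E})$ in the first equality of \eqref{12.11.26.19.0}.

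For the reverse inclusion and the second identity of \eqref{12.11.26.19.10}, I would take $\Psi\in\widetilde{\mathcal E}$ and exploit $H_0\Psi=-V\Psi=-vw\Psi\in\mathcal L^\beta$. Lemma~\ref{13.1.18.6.0} exhibits $-G_{0,0}vw\Psi$ as one $(\mathcal L^\beta)^*$-solution, and a direct solution of the recurrence with the Dirichlet condition gives $\Ker H_0\cap(\mathcal L^\beta)^*=\mathbb C\mathbf n$; hence $\Psi=-G_{0,0}vw\Psi+a\mathbf n$ for some $a\in\mathbb C$. Substituting into the tautological identity $v^*\Psi=Uw\Psi$ and invoking $v^*G_{0,0}v=M_0-U$ forces $M_0w\Psi=av^*\mathbf n$, which simultaneously yields $SM_0w\Psi=0$ (so $w\Psi\in\Ker SM_0$) and pins down $a$ to the $\mathbf n$-coefficient of $zw\Psi$ when $v^*\mathbf n\neq 0$, while when $v^*\mathbf n=0$ the scalar $a$ stays free and parametrises a $\mathbb C\mathbf n\cap\Ker v^*$ contribution. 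In either case $\Psi-zw\Psi\in\mathbb C\mathbf n\cap\Ker v^*$, giving $zw\Psi\in\widetilde{\mathcal E}$ and completing $z^{-1}(\widetilde{\mathcal E})=\Ker SM_0$.

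The finer identities hinge on Lemma~\ref{12.11.24.18.24}, from which
\begin{equation*}
z\Phi=\|v^*\mathbf n\|^{\dagger 2}\langle M_0v^*\mathbf n,\Phi\rangle\mathbf n-\langle v^*\mathbf n,\Phi\rangle\mathbf 1+r(\Phi),\quad r(\Phi)\in\mathcal L^{\beta-2}.
\end{equation*}
Membership of $z\Phi$ in $\mathcal E$ (resp.\ $\mathsf E$) amounts to vanishing of the $\mathbf n$-coefficient (resp.\ of both the $\mathbf n$- and $\mathbf 1$-coefficients), and combined with $\Phi\in\Ker(SM_0)$ this is readily seen to be equivalent to $\Phi\in\Ker M_0$ (resp.\ $\Phi\in\Ker M_0\cap\Ker M_1$); this yields \eqref{12.11.26.19.1} and the first equality of \eqref{12.11.27.2.28}. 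The alternative characterisation via $m_0$ uses that $M_0$ is self-adjoint Fredholm (as $U$ plus a compact perturbation), so $v^*\mathbf n\in(\Ker M_0)^\perp$ whenever $Qv^*\mathbf n=0$; in that degenerate case $M_1\Phi$ already vanishes on all of $\Ker M_0$, while otherwise $m_0\Phi=0$ immediately gives $\langle v^*\mathbf n,\Phi\rangle=0$. Finally, $w\Psi=0$ together with $H\Psi=0$ forces $H_0\Psi=0$, placing $\Psi\in\mathbb C\mathbf n\cap\Ker v^*$; intersection with $\mathcal E$ or $\mathsf E\subset\mathbb C\mathbf 1\oplus\mathcal L^{\beta-2}$ then gives $\{0\}$. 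The main obstacle throughout is the bookkeeping in the degenerate case $v^*\mathbf n=0$, where the extra $\mathbb C\mathbf n$-freedom in the inhomogeneous solution must be carefully disentangled from the range of $z$.
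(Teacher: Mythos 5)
Your proposal is correct and follows essentially the same route as the paper: the identity $wz\Phi=\Phi$ via $v^*G_{0,0}v=M_0-U$, the key formula $Hz\Phi=-vUSM_0\Phi$ (your case analysis is just the projection $I-S$ written out) together with injectivity of $vU$, and the asymptotic decomposition of $z\Phi$ from Lemma~\ref{12.11.24.18.24} for the finer identities \eqref{12.11.26.19.1}--\eqref{12.11.27.2.28}. The only minor deviation is that for $zw\Psi\in\widetilde{\mathcal E}$ you solve $H_0\Psi=-vw\Psi$ and decompose $\Psi=-G_{0,0}vw\Psi+a\mathbf n$ to get $M_0w\Psi=av^*\mathbf n$, where the paper instead verifies $SM_0w\Psi=0$ by a short algebraic manipulation; both are fine, and yours makes the implicit asymptotics of $\Psi$ explicit.
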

\begin{proof}
\textit{Step 1.}\quad
We prove the first assertion of \eqref{12.11.26.19.10}.
Let $\Phi\in \mathop{\mathrm{Ker}}S M_0$.
Then, using $v^{\ast}G_{0,0}v=M_0-U$, we can compute 
\begin{align*}
w z\Phi
&=
Uv^*\Bigl[
\|v^*\mathbf n\|^{2\dagger}\langle M_0v^*\mathbf n,\Phi\rangle \mathbf n
-G_{0,0}v\Phi\Bigr]\\
&= U(1-S)M_0\Phi-UM_0\Phi+\Phi\\
&=\Phi.
\end{align*}

\smallskip
\noindent
\textit{Step 2.}\quad
Before the second assertion of \eqref{12.11.26.19.10}
we prove \eqref{12.11.26.19.0}.
We first note that by Lemma~\ref{13.1.18.6.0} and $v^{\ast}G_{0,0}v=M_0-U$
for any $\Phi \in \mathcal K$ 
\begin{align}
Hz\Phi
&=
(H_0+vUv^*)\Bigl[\|v^*\mathbf n\|^{2\dagger}
\langle M_0v^*\mathbf n,\Phi\rangle \mathbf n -G_{0,0}v\Phi\Bigr]\notag
\\&
=
-v\Phi\notag\\
&\quad+\|v^*\mathbf n\|^{2\dagger}\langle M_0v^*\mathbf n,\Phi\rangle vUv^*\mathbf n -vU(M_0-U)\Phi\notag\\
&=-vUS M_0\Phi.
\label{16081614}
\end{align}
Then, since $vU$ is injective,
it follows that $z\Phi\in \widetilde{\mathcal E}$ if and only if
$\Phi\in\mathop{\mathrm{Ker}}S M_0$,
which implies the first identity of (\ref{12.11.26.19.0}).
As for the second, 
we first note that 
for any $\Psi\in \widetilde{\mathcal E}\cap \mathop{\mathrm{Ker}}w$
\begin{align*}
H_0\Psi =0,\quad
v^*\Psi=0.
\end{align*}
Since the first identity $H_0\Psi =0$ 
can be rephrased as $\Psi\in \mathbb C\mathbf n $,
we obtain $\Psi \in \mathbb C\mathbf n \cap \mathop{\mathrm{Ker}}v^*$.
The inverse inclusion is almost obvious,
and hence the second identity of (\ref{12.11.26.19.0}).

\smallskip
\noindent
\textit{Step 3.}\quad
Now we prove the second assertion of \eqref{12.11.26.19.10}.
Let $\Psi\in\widetilde{\mathcal E}$. 
Then by reusing \eqref{16081614} and noting 
$M_0=U+v^{\ast}G_{0,0}v$ and Lemma~\ref{13.1.18.6.0}
\begin{align*}
Hzw\Psi
&
=-vUS (v^*+v^{\ast}G_{0,0}V)\Psi
\\&
=-vUS v^*G_{0,0}(H_0+V)\Psi
\\&
=0,
\end{align*}
which implies $zw\Psi\in\widetilde{\mathcal E}$.

\smallskip
\noindent
\textit{Step 4.}\quad
Let us prove (\ref{12.11.26.19.1}).
Let $\Phi\in \mathcal K$.
By Lemma~\ref{12.11.24.18.24} we can write
\begin{align}
\begin{split}
z\Phi[n]
&=
\|v^*\mathbf n\|^{2\dagger}\langle v^*\mathbf n,M_0\Phi\rangle\mathbf n [n]
-\langle v^*\mathbf n ,\Phi\rangle\mathbf 1[n]
\\&\quad
+\sum_{m=n}^\infty (m-n)(v\Phi)[m].
\end{split}\label{b11.1.30.19.38}
\end{align}
As in the proof of Lemma~\ref{12.11.24.18.24},
the last term in \eqref{b11.1.30.19.38}
belong to $\mathcal L^{\beta-2}$.
This fact combined with the first identity of (\ref{12.11.26.19.0})
implies that $z\Phi\in \mathcal E$
if and only if 
\begin{align*}
\Phi\in\mathop{\mathrm{Ker}}S M_0,\quad
\|v^*\mathbf n\|^{2\dagger}\langle v^*\mathbf n,M_0\Phi\rangle =0.
\end{align*}
Hence the first identity of (\ref{12.11.26.19.1}) is obtained.
As for the second one we can proceed as in Step 2, and 
it is almost obvious.

\smallskip
\noindent
\textit{Step 5.}\quad
The assertion (\ref{12.11.27.2.28}) 
can be shown similarly to Step 4,
and we omit the details.
\end{proof}

\begin{proof}[Proof of Proposition~\ref{13.1.16.2.51}.]
From (\ref{12.11.26.19.10})
and the first identity of (\ref{12.11.26.19.0}) 
we can deduce that the restrictions
\begin{align*}
z|_{\mathop{\mathrm{Ker}}S M_0}
\colon\mathop{\mathrm{Ker}}S M_0\to \widetilde{\mathcal E},\quad
w|_{\widetilde{\mathcal E}}\colon \widetilde{\mathcal E}\to \mathop{\mathrm{Ker}}S M_0
\end{align*}
are injective and surjective, respectively. 
Hence, the asserted isomorphisms (\ref{12.12.19.6.56})--(\ref{12.12.19.6.58}) 
are direct consequences of (\ref{12.11.26.19.0})--(\ref{12.11.27.2.28}),
respectively.
We note that the last inequality of \eqref{12.12.19.6.58}
is obvious by the definitions \eqref{160817218} and 
\eqref{160817219}.

The asymptotics \eqref{13.3.7.13.48} follows immediately by \eqref{12.12.19.6.56}, 
\eqref{12.12.27.16.4} and \eqref{13.1.16.1.52}.
Next, for any $\Psi\in\mathcal E$ we let $\Phi=w\Psi=Uv^*\Psi\in 
\mathop{\mathrm{Ker}}M_0$.
Then, since $\Psi=z\Phi=-G_{0,0}v\Phi$,  
Lemma~\ref{12.11.24.18.24} implies that 
$\Psi-\mathbf 1\in\mathcal L^{\beta-2}$ if and only if 
$\langle \mathbf n,-v\Phi\rangle=1$,
which in turn is equivalent to $\langle V\mathbf n,\Psi\rangle=-1$.
Hence we are done.
\end{proof}

\begin{proof}[Proof of Corollary~\ref{160616438}]
We first claim that 
\begin{align}
\dim (\widetilde{\mathcal E}/\mathcal E)\le 1,\quad 
\dim(\mathcal E/\mathsf E)\le 1,\quad 
\dim\mathsf E<\infty.
\label{13.3.7.13.49}
\end{align}
The first and second inequalities of \eqref{13.3.7.13.49} are obvious by 
\eqref{13.3.7.13.48}, \eqref{16071616} and \eqref{16071617}.
For the last inequality of \eqref{13.3.7.13.49}
we note that $Uv^*G_{0,0}v\in \mathcal C(\mathcal K)$.
Then
\begin{align*}
\dim \mathsf E
&\le \dim \mathcal E
=\dim\mathop{\mathrm{Ker}} M_0\\
&=\dim\mathop{\mathrm{Ker}} (1+Uv^*G_{0,0}v)<\infty.
\end{align*}
Hence the claim follows.

Now we prove the assertions \ref{12.12.19.6.30}--\ref{12.12.19.6.33}
of the corollary.
We note that the former parts of \ref{12.12.19.6.30}--\ref{12.12.19.6.33} 
are obvious by Proposition~\ref{13.1.16.2.51},
and hence we may discuss only the latter parts.

\smallskip
\noindent
\textit{\ref{12.12.19.6.30}.}\quad
Let the threshold $0$ be a regular point.
Then by definition we have
$$\dim\mathcal E=\dim\mathsf E=0.$$ 
If $v^*\mathbf n=0$, then,
since $S=I_{\mathcal K}$, we have by \eqref{12.12.19.6.56}
that 
$\widetilde{\mathcal E}
=\mathbb C\mathbf n$.
Otherwise, noting that $M_0$ is invertible, we have by \eqref{12.12.19.6.56}
that $\widetilde{\mathcal E}
=\mathbb CzM^{-1}v^*\mathbf n$.
In either cases we can conclude that 
$$\dim \widetilde{\mathcal E}=1.$$

\smallskip
\noindent
\textit{\ref{12.12.19.6.31}.}\quad
Let the threshold $0$ be an exceptional point of the first kind.
Then by definition and claim \eqref{13.3.7.13.49}
\begin{align*}
\dim\mathcal E=1,\quad
\dim\mathsf E=0.
\end{align*}
Let us show that $\widetilde{\mathcal E}=\mathcal E$.
Since 
$Q\mathcal K$ is nontrivial and 
$m_0=-\ket{Qv^*\mathbf{n}}\bra{Qv^*\mathbf{n}}$ is invertible there,
it follows that 
\begin{align}
Qv^*\mathbf n\neq 0. 
\label{1608172134}
\end{align}
Now it suffices to show that 
$\mathop{\mathrm{Ker}} S M_0\subset  \mathop{\mathrm{Ker}} M_0$.
Let $\Phi\in \mathop{\mathrm{Ker}} S M_0$.
Since $S$ is the orthogonal projections onto 
the kernel of $M_1$ given by \eqref{160817218},
there exists $c\in\mathbb C$ such that 
\begin{align*}
M_0\Phi=cv^*\mathbf n.
\end{align*}
Apply $Q$ to both sides above, 
then by \eqref{1608172134} it follows that $c=0$.
Hence $\Phi\in \mathop{\mathrm{Ker}}M_0$,
and the latter assertion is verified.

\smallskip
\noindent
\textit{\ref{12.12.19.6.32}.}\quad
Let the threshold $0$ be an exceptional point of the second kind.
Then by definition and claim \eqref{13.3.7.13.49}
\begin{align*}
\dim(\mathcal E/\mathsf E)=0,\quad
1\le \dim\mathsf E<\infty.
\end{align*}
If $v^*\mathbf n=0$, then $S=I_{\mathcal K}$,
and hence by \eqref{12.12.19.6.56}
\begin{align*}
\widetilde{\mathcal E}&=
z(\mathop{\mathrm{Ker}} M_0)\oplus
\mathbb C\mathbf n
=\mathcal E\oplus
\mathbb C\mathbf n.
\end{align*}
Otherwise, since $m_0=-\ket{Qv^*\mathbf{n}}\bra{Qv^*\mathbf{n}}=0$, 
we have 
$$0\neq v^*\mathbf n\in (\mathop{\mathrm{Ker}}M_0)^\perp
=\mathop{\mathrm{Ran}}M_0,$$
and hence we can find $\Phi\in \mathcal K\setminus \{0\}$ such that 
$M_0\Phi=v^*\mathbf n$.
Such $\Phi$ is unique up to $\mathop{\mathrm{Ker}}M_0$, and then
by \eqref{12.12.19.6.56}
\begin{align*}
\widetilde{\mathcal E}&=
z(\mathop{\mathrm{Ker}} M_0\oplus
\mathbb C\Phi)
=\mathcal E\oplus
\mathbb Cz\Phi.
\end{align*}
In either cases we obtain 
\begin{align*}
\dim(\widetilde{\mathcal E}/\mathcal E)=1.
\end{align*}

\smallskip
\noindent
\textit{\ref{12.12.19.6.33}.}\quad
Let the threshold $0$ be an exceptional point of the third kind.
Then by definition and claim \eqref{13.3.7.13.49}
\begin{align*}
\dim(\mathcal E/\mathsf E)=1,\quad
1\le \dim\mathsf E<\infty.
\end{align*}
Now it suffices to show that $\widetilde{\mathcal E}=\mathcal E$,
but this can be proved exactly the same manner as in
the proof of the assertion \ref{12.12.19.6.31} above.
Hence we are done.
\end{proof}

\begin{proof}[Proof of Corollary~\ref{13.3.25.13.0}]
It suffces to show that $\mathsf E=\{0\}$.
Let $\Psi\in \mathsf E$. 
Then it follows by Lemma~\ref{12.12.27.16.6} that $\Psi=zw\Psi$.
This equation can be rephrased as 
\begin{align}
\Psi[n]=\sum_{m=n}^\infty (m-n)V[m]\Psi[m]
\label{1608161808}
\end{align}
by Lemma~\ref{12.11.24.18.24} and 
the asymptotics of $\Psi$ as $n\to \infty$.
Since $V\in \mathcal L^\beta$, we can 
choose large $n_0\ge 0$ such that 
\begin{align}
\sum_{n=n_0}^\infty n|V[n]|\le \tfrac12.
\label{1608161809}
\end{align}
By \eqref{1608161808} and \eqref{1608161809} we obtain
\begin{align*}
\bigl|\Psi[n]\bigr|&\le \tfrac12 \sup_{m\ge n_0}\bigl|\Psi[m]\bigr|
\text{ for }n\ge n_0,\\
\intertext{or}  
\Psi[n]&=0\text{ for }n\ge n_0.
\end{align*}
Since the equation $H\Psi=0$ is a difference equation,
the above initial condition at infinity
yields $\Psi=0$, and hence $\mathsf E=\{0\}$.
Hence we are done.
\end{proof}

\section{The first step in resolvent expansion}\label{1608217}

This section gives a short preliminary computation for the proofs 
of Theorems~\ref{thm-reg}--\ref{thm-ex3} given in the 
following sections.
These computations are common to all the proofs.

Define the operator
$M(\kappa)\in\mathcal B(\mathcal K)$
for $\mathop{\mathrm{Re}}\kappa>0$ by
\begin{equation}
M(\kappa)=U+v^*R_0(\kappa)v.\label{Mdef}
\end{equation}
Fix $\kappa_0>0$ such that $z=-\kappa^2$ belongs to the resolvent set of $H$ for any $\mathop{\mathrm{Re}}\kappa\in(0,\kappa_0)$. This is possible due to the decay assumptions on $V$.

\begin{lemma}\label{prop23}
Let the operator $M(\kappa)$ be defined as above.
\begin{enumerate}
\item\label{16082118}
Let Assumption~\ref{assumV} hold for some integer $\beta\ge 2$.
Then 
\begin{equation}\label{M-expand}
M(\kappa)=\sum_{j=0}^{\beta-2} \kappa^jM_j+\mathcal{O}(\kappa^{\beta-1})
\quad 
\text{in }\mathcal B(\mathcal K)
\end{equation}
with $M_j\in\mathcal B(\mathcal K)$ given by
\begin{align}\label{m-expand}
M_0=U+v^*G_{0,0}v,\quad
M_j=v^*G_{0,j}v\text{ for }  j\geq1.
\end{align}

\item\label{16082119}
Let Assumption~\ref{assumV} hold with $\beta\ge 1$.
For any $0<\mathop{\mathrm{Re}}\kappa<\kappa_0$ the operator $M(\kappa)$
is invertible in $\mathcal B(\mathcal K)$,
and 
\begin{align*}
M(\kappa)^{-1}&=U-Uv^*R(\kappa)vU.
\end{align*}
Moreover, 
\begin{align}
R(\kappa)&=R_0(\kappa)
-R_0(\kappa)vM(\kappa)^{-1}v^*R_0(\kappa).\label{second-resolvent}
\end{align}
\end{enumerate}
\end{lemma}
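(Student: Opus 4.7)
The lemma splits into two independent assertions, and the plan is to handle them separately.

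For part (\ref{16082118}), the plan is essentially to substitute the expansion from Proposition~\ref{prop12} into the definition \eqref{Mdef}. Taking $N=\beta-2$ in that proposition yields an expansion of $R_0(\kappa)$ in $\mathcal B^\beta=\mathcal B(\mathcal L^\beta,(\mathcal L^\beta)^*)$. Since Assumption~\ref{assumV} provides $v\in\mathcal B(\mathcal K,\mathcal L^\beta)$, the sandwich $A\mapsto v^*Av$ continuously maps $\mathcal B^\beta$ into $\mathcal B(\mathcal K)$, so the expansion for $R_0(\kappa)$ transfers directly to one for $v^*R_0(\kappa)v$ and hence for $M(\kappa)=U+v^*R_0(\kappa)v$, yielding \eqref{M-expand} with the coefficients \eqref{m-expand}. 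I expect no real obstacle here.

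For part (\ref{16082119}), the plan is to verify the proposed closed-form inverse of $M(\kappa)$ by direct multiplication. The tools available are the two forms of the second resolvent equation, $R=R_0-R_0VR$ and $R=R_0-RVR_0$; both hold for $0<\mathop{\mathrm{Re}}\kappa<\kappa_0$ since $-\kappa^2$ then lies in the resolvent set of $H$, and they rearrange to $R_0VR=RVR_0=R_0-R$. Using also $U^2=I$ and $V=vUv^*$, I would compute
\begin{align*}
M(\kappa)\bigl(U-Uv^*R(\kappa)vU\bigr)
&= I+v^*(R_0-R)vU-v^*R_0VRvU \\
&= I,
\end{align*}
and verify the opposite composition analogously from the mirror identity. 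This gives both invertibility of $M(\kappa)$ in $\mathcal B(\mathcal K)$ and the stated formula for its inverse.

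For the final identity \eqref{second-resolvent}, the plan is to substitute the inversion formula just obtained into its right-hand side, use $V=vUv^*$ twice to absorb the factors of $U$, and collapse using the two forms of the second resolvent equation:
\begin{align*}
R_0vM(\kappa)^{-1}v^*R_0
&= R_0VR_0-R_0VRVR_0 \\
&= R_0VR_0-(R_0-R)VR_0 \\
&= RVR_0 = R_0-R,
\end{align*}
which rearranges to \eqref{second-resolvent}. The main point requiring care is choosing the correct form of the resolvent equation at each step; beyond that the computation is entirely routine and presents no genuine obstacle.
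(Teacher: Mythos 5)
Your proposal is correct and follows essentially the same route as the paper: part (\ref{16082118}) by sandwiching the free expansion of Proposition~\ref{prop12} (with $N=\beta-2$, valid in $\mathcal B^{\beta}$) between $v^*$ and $v$, and part (\ref{16082119}) by the standard direct verification via $V=vUv^*$, $U^2=I$, and the two forms of the second resolvent equation. The paper merely omits these computations, citing Ito--Jensen, and your write-up supplies exactly the details that reference contains.
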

\begin{proof}
\textit{1.}\quad 
This result follows from Assumption~\ref{assumV} and Proposition~\ref{prop12}.

\smallskip
\noindent
\textit{2.}\quad 
The assertion is verified by direct computations, 
see Ito-Jensen\cite[Proposition 1.13]{IJ}.
\end{proof}

Note that the operators $M_0$ and $M_1$ coincide with those defined 
in Section~\ref{12.12.19.2.5}.

By 
Lemma~\ref{prop23}.\ref{16082118}
the operator $M(\kappa)$ has an expansion, 
and 
by Lemma~\ref{prop23}.\ref{16082119} and Proposition~\ref{prop12} 
an expansion of $R(\kappa)$ is reduced to that of the inverse 
$M(\kappa)^{-1}$. 
If the leading operator $M_0\in\mathcal B(\mathcal K)$ 
is invertible, or by Proposition~\ref{13.1.16.2.51}, 
if the threshold $0$ is a regular point, 
we can employ the Neumann series to compute the expansion of $M(\kappa)^{-1}$.
Otherwise, we shall employ an inversion formula introduced in Jensen-Nenciu\cite{JN0} in 
a way similar to Ito-Jensen\cite{IJ}. 
We note that we are also going to use the pseudo-inverse several times.
For reference 
we present the inversion formula and the pseudo-inverse 
in Appendix~\ref{1608211941}.

\section{Regular threshold}\label{1608241726}

In this section we prove Theorem~\ref{thm-reg}.
In this case the leading operator $M_0$ in the expansion \eqref{M-expand}
is invertible by Corollary~\ref{160616438}.
Hence the inversion formula in Appendix~\ref{1608211941} is not needed.

\begin{proof}[Proof of Theorem~\ref{thm-reg}]
By the assumption and Corollary~\ref{160616438}
it follows that $M_0$ is invertible in $\mathcal{B}(\mathcal{K})$.
Hence we can use the Neumann series to invert \eqref{M-expand}.
Let us write it as
\begin{align}
M(\kappa)^{-1}=\sum_{j=0}^{\beta-2}\kappa^jA_j+\mathcal O(\kappa^{\beta-1}),\quad
A_j\in\mathcal B(\mathcal K).
\label{160819}
\end{align}
The coefficients $A_j$ are written explicitly in terms of the $M_j$.
The first two terms are
\begin{align}
A_0&=M_0^{-1},\quad 
A_1=-M_0^{-1}M_1M_0^{-1}.\label{12.12.1.7.18}
\end{align}
We insert the expansions \eqref{free-expan} with $N=\beta-2$ and \eqref{160819}
into \eqref{second-resolvent}, and then obtain the expansion
\begin{align*}
R(\kappa)&=\sum_{j=0}^{\beta-2}\kappa^j
G_j+\mathcal O(\kappa^{\beta-1});\\
G_j&=G_{0,j}
-\sum_{\genfrac{}{}{0pt}{}{j_1\ge 0,j_2\ge 0,j_3\ge 0}{j_1+j_2+j_3=j}}
G_{0,j_1}vA_{j_2}v^*G_{0,j_3}.
\end{align*}
This result and \eqref{12.12.1.7.18} in particular leads to the expressions 
\begin{align*}
G_0&=G_{0,0}-G_{0,0} vM_0^{-1}v^*G_{0,0}
,
\\
G_1
&=G_{0,1}
-G_{0,1}vM_0^{-1}v^*G_{0,0}\\
&\quad+G_{0,0}vM_0^{-1}M_1M_0^{-1}v^*G_{0,0}
-G_{0,0}vM_0^{-1}v^*G_{0,1}
\\&
=
(I-G_{0,0} vM_0^{-1}v^*)G_{0,1}(I-vM_0^{-1}v^*G_{0,0}).
\end{align*}
The expression \eqref{160822427} is obtained.
The expression \eqref{160822428} follows by noting
$$(I-G_{0,0} vM_0^{-1}v^*)\mathbf n=\widetilde\Psi_c,$$
which can be verified with ease by \eqref{12.12.19.6.56}.
\end{proof}
\begin{proof}[Verification of \eqref{16082620}]
The first identity in \eqref{16082620} follows by 
\begin{align*}
(I+&G_{0,0} V)(I-G_{0,0} vM_0^{-1}v^*)
\\
&=
I-G_{0,0} vM_0^{-1}v^* +G_{0,0} V-G_{0,0} VG_{0,0} vM_0^{-1}v^*
\\&
=I-G_{0,0} vU(U+v^*G_{0,0}v)M_0^{-1}v^* +G_{0,0} V
\\
&=I,
\end{align*}
\begin{align*}
(I-&G_{0,0} vM_0^{-1}v^*)(I+G_{0,0} V)\\
&=
I-G_{0,0} vM_0^{-1}v^*+G_{0,0} V-G_{0,0} vM_0^{-1}v^*G_{0,0} V
\\&
=I-G_{0,0} vM_0^{-1}(U+v^*G_{0,0}v)Uv^* +G_{0,0} V
\\
&=I.
\end{align*}
The second identity is verified analogously.
\end{proof}

\section{Exceptional threshold of the first kind}\label{1608241727}

In this section we prove Theorem~\ref{thm-ex1}. 
In this case the leading operator $M_0\in \mathcal B(\mathcal K)$ in \eqref{M-expand}
is not invertible, and 
we need the inversion formula given in Appendix~\ref{1608211941}
to invert the expansion~\eqref{M-expand}.

\begin{proof}[Proof of Theorem~\ref{thm-ex1}]
By the assumption and Corollary~\ref{160616438}
the leading operator $M_0$ from \eqref{M-expand}
is not invertible in $\mathcal B(\mathcal K)$, 
and we are going to apply Proposition~\ref{12.11.9.3.28}.
Let us write the expansion \eqref{M-expand} as
\begin{equation}\label{ex1-M-expand}
M(\kappa)
=\sum_{j=0}^{\beta-2}\kappa^jM_j+\mathcal{O}(\kappa^{\beta-1})=M_0+\kappa \widetilde{M}_1(\kappa).
\end{equation}
Let $Q$ be the orthogonal projection onto $\Ker M_0$, cf.\ 
Section~\ref{12.12.19.2.5}, and define 
\begin{equation}\label{def-J0}
m(\kappa)=
\sum_{j=0}^{\infty}(-1)^j
\kappa^jQ\widetilde{M}_1(\kappa)\bigl[(M_0^\dagger+Q)\widetilde{M}_1(\kappa)\bigr]^{j}Q.
\end{equation}
Then by Proposition~\ref{12.11.9.3.28} we have  
\begin{align}
M(\kappa)^{-1}&=(M(\kappa)+Q)^{-1}\notag\\
&\quad+\frac{1}{\kappa}
(M(\kappa)+Q)^{-1}m(\kappa)^{\dagger}(M(\kappa)+Q)^{-1}.
\label{16082255}
\end{align}
Note that by using \eqref{ex1-M-expand} we 
can rewrite \eqref{def-J0} in the form
\begin{equation}
m(\kappa)=\sum_{j=0}^{\beta-3}\kappa^jm_j+\mathcal{O}(\kappa^{\beta-2});
\quad 
m_j\in\mathcal B(Q\mathcal K).
\label{16082317}
\end{equation}
We have the following expressions for the first four coefficients:
\begin{align}
m_0&=QM_1Q,
\label{1608231743}
\\
m_1&=QM_2Q-QM_1(M_0^\dagger +Q)M_1Q,
\label{1608231744}
\\
m_2&=QM_3Q-QM_1(M_0^\dagger +Q)M_2Q\notag\\
&\quad-QM_2(M_0^\dagger +Q)M_1Q\notag
\\
&\quad+QM_1(M_0^\dagger +Q)M_1(M_0^\dagger +Q)M_1Q,
\label{1608231745}
\\
m_3&=QM_4Q
-QM_1(M_0^\dagger +Q)M_3Q\notag\\
&\quad-QM_2(M_0^\dagger +Q)M_2Q
-QM_3(M_0^\dagger +Q)M_1Q\notag
\\
&\quad
+QM_1(M_0^\dagger +Q)M_1(M_0^\dagger +Q)M_2Q\notag
\\
&\quad
+QM_1(M_0^\dagger +Q)M_2(M_0^\dagger +Q)M_1Q\notag
\\
&\quad
+QM_2(M_0^\dagger +Q)M_1(M_0^\dagger +Q)M_1Q\notag
\\
&\quad
-QM_1(M_0^\dagger +Q)M_1(M_0^\dagger +Q)M_1(M_0^\dagger +Q)M_1Q.
\label{1608231746}
\end{align}
Then by the assumption and Corollary~\ref{160616438} 
the coefficient $m_0=QM_1Q$ is invertible in $\mathcal B(Q\mathcal K)$.
Thus the Neumann series provides 
the expansion of the inverse $m(\kappa)^\dagger$.
Let us write it as 
\begin{align}
m(\kappa)^\dagger&=\sum_{j=0}^{\beta-3}\kappa^jA_j+\mathcal{O}(\kappa^{\beta-2}),\label{ex1-m-expand}\\
A_0&=m_0^\dagger,\quad A_j\in\mathcal B(Q\mathcal K).
\notag 
\end{align}
The Neumann series also provide an expansion of 
$(M(\kappa)+Q)^{-1}$, which we write as 
\begin{align}
(M(\kappa)+Q)^{-1}
=\sum_{j=0}^{\beta-2}\kappa^jB_j+\mathcal{O}(\kappa^{\beta-1}),
\label{1608225}
\end{align}
where $B_j\in\mathcal B(\mathcal K)$.
The first three coefficients can be written as follows: 
\begin{align*}
B_0&=M_0^\dagger +Q,\\
B_1&=-(M_0^\dagger +Q) M_1(M_0^\dagger +Q),\\
B_2&=-(M_0^\dagger +Q) M_2(M_0^\dagger +Q)\\
&\quad+(M_0^\dagger +Q) M_1(M_0^\dagger +Q) M_1(M_0^\dagger +Q).
\end{align*} 
Now we insert the expansions \eqref{ex1-m-expand} and \eqref{1608225} 
into the formula \eqref{16082255},
and then 
\begin{align}
M(\kappa)^{-1}
&=\sum_{j=-1}^{\beta-4}\kappa^jC_j+\mathcal O(\kappa^{\beta-3}),
\notag\\
C_j&=B_j
+\sum_{\genfrac{}{}{0pt}{}{j_1\ge 0,j_2\ge 0,j_3\ge 0}{j_1+j_2+j_3=j+1}}
B_{j_1}A_{j_2}B_{j_3},
\label{160822553}
\end{align}
with $B_{-1}=0$. 
Next we insert the expansions \eqref{free-expan} with $N=\beta-3$ and 
\eqref{160822553} into the formula \eqref{second-resolvent}.
Then we obtain the expansion
\begin{align*}
R(\kappa)&=\sum_{j=-1}^{\beta-4}\kappa^j
G_j+\mathcal O(\kappa^{\beta-3}),\\
G_j&=G_{0,j}
-\sum_{\genfrac{}{}{0pt}{}{j_1\ge 0,j_2\ge -1,j_3\ge 0}{j_1+j_2+j_3=j}}
G_{0,j_1}vC_{j_2}v^*G_{0,j_3},
\end{align*}
with $G_{0,-1}=0$. This verifies \eqref{expand-first}.

Next we compute $G_{-1}$.
By the above expressions we can write
\begin{align*}
G_{-1}&=-G_{0,0}vC_{-1} v^*G_{0,0}
=-G_{0,0}vm_0^\dagger v^*G_{0,0}
,
\end{align*}
and by \eqref{G01}
\begin{equation}\label{SM1S}
m_0=QM_1Q=-\ket{Qv^*\mathbf{n}}\bra{Qv^*\mathbf{n}}.
\end{equation}
The expression \eqref{SM1S} implies that 
$m_0$ is at most of rank $1$, but 
by the assumption and  Corollary~\ref{160616438} it is also invertible in 
$\mathcal B(Q\mathcal K)$.
Hence it follows that 
\begin{align*}
Qv^*\mathbf{n} \neq 0,\quad 
\dim\Ker M_0=\dim Q\mathcal K=1.
\end{align*}
Then we can write 
\begin{align}
m_0^\dagger &=- |\Phi_c\rangle\langle \Phi_c|;\\ 
\Phi_c&=-\|Qv^*\mathbf{n}\|^{-2}Qv^*\mathbf{n}\in Q\mathcal K=\Ker M_0,
\label{16082516}
\end{align}
such that 
\begin{align*}
G_{-1}=|\Psi_c\rangle\langle\Psi_c|; \quad
\Psi_c=-G_{0,0}v\Phi_c\in \mathcal E.
\end{align*}Let us to show that 
the above resonance function $\Psi_c$ is canonical.
We have 
\begin{align*}
\langle V\mathbf n,\Psi_c\rangle
&
=-\langle v^*\mathbf n,U(M_0-U)\Phi_c\rangle
=\langle v^*\mathbf n,\Phi_c\rangle
=-1,
\end{align*}
and hence we obtain \eqref{1608229}.

Finally we prove \eqref{16082513}. 
We first express $G_0$ by $A_*$ and $B_*$,
and then insert expressions for them:
\begin{align*}
G_0
&
=
G_{0,0}
-G_{0,0}vC_{-1}v^*G_{0,1}
-G_{0,1}vC_{-1}v^*G_{0,0}\\
&\quad
-G_{0,0}vC_0v^*G_{0,0}
\\&
=
G_{0,0}
-G_{0,0}vA_0v^*G_{0,1}
-G_{0,1}vA_0v^*G_{0,0}
\\&\quad
-G_{0,0}v\bigl(
B_0
+B_0A_0B_1
+B_1A_0B_0\\
&\qquad
+B_0A_1B_0
\bigr)v^*G_{0,0}
\\&
= 
G_{0,0}
-G_{0,0}vm_0^\dagger v^*G_{0,1}
-G_{0,1}vm_0^\dagger v^*G_{0,0}
\\&\quad
-G_{0,0}v\Bigl(
M_0^\dagger +Q
-m_0^\dagger M_1(M_0^\dagger +Q)\\
&\qquad
-(M_0^\dagger +Q) M_1m_0^\dagger
-m_0^\dagger m_1m_0^\dagger
\Bigr)v^*G_{0,0}.
\end{align*}
We expand the terms in big parentheses and unfold $m_1$,
noting $m_0m_0^\dagger=m_0^\dagger m_0=Q$: 
\begin{align*}
G_0&
= 
G_{0,0}
-G_{0,0}vm_0^\dagger v^*G_{0,1}
-G_{0,1}vm_0^\dagger v^*G_{0,0}
\\&\quad
-G_{0,0}v\Bigl(
M_0^\dagger
-m_0^\dagger M_1M_0^\dagger 
-M_0^\dagger M_1m_0^\dagger\\
&\qquad
-m_0^\dagger M_2m_0^\dagger
+m_0^\dagger M_1M_0^\dagger M_1m_0^\dagger
\Bigr)v^*G_{0,0}
\\&
= 
G_{0,0}
+G_{0,0}vm_0^\dagger M_2m_0^\dagger v^*G_{0,0}
-G_{0,0}vm_0^\dagger v^*G_{0,1}\\
&\quad-G_{0,1}vm_0^\dagger v^*G_{0,0}
\\&\quad
-G_{0,0}v\bigl(I-m_0^\dagger M_1\bigr)M_0^\dagger\bigl(I- M_1m_0^\dagger\bigr)v^*G_{0,0}.
\end{align*}
Now we use \eqref{16082516} and the expressions 
$M_j=v^*G_{0,j}v$, $j\ge 1$, and $G_{0,1}=-|\mathbf n\rangle\langle\mathbf n|$:
\begin{align*}
G_0
&
=
G_{0,0}
+|\Psi_c\rangle\langle v\Phi_c, G_{0,2}v\Phi_c\rangle\langle \Psi_c|
-|\Psi_c\rangle\langle\mathbf n|
-|\mathbf n\rangle\langle\Psi_c|
\\&\quad
-\bigl(G_{0,0}-|\Psi_c\rangle\langle\mathbf n|\bigr)
vM_0^\dagger v^*
\bigl(G_{0,0}-|\mathbf n\rangle\langle \Psi_c|\bigr).
\end{align*}
Hence it remains to compute the coefficient of the second term
in the last expression.
We have by $\Phi_c=Uv\Psi$
\begin{align*}
\langle v\Phi_c, G_{0,2}v\Phi_c\rangle
=\langle V\Psi_c, G_{0,2}V\Psi_c\rangle
=\langle H_0\Psi_c, G_{0,2}H_0\Psi_c\rangle.
\end{align*}
Here we remark that we cannot directly use $G_{0,2}H_0=-G_{0,0}$,
since \eqref{18082023} holds as an extension from rapidly decaying functions,
while $\Psi_c$ is not decaying.
However, it suffices to subtract the leading asymptotics as follows.
\begin{align*}
\langle v\Phi_c,& G_{0,2}v\Phi_c\rangle\\
&=
\bigl\langle H_0(\Psi_c-\mathbf 1), G_{0,2}H_0\Psi_c\bigr\rangle
+(G_{0,2}H_0\Psi_c)[1]
\\&
=
-\bigl\langle (\Psi_c-\mathbf 1), G_{0,0}H_0\Psi_c\bigr\rangle\\
&\quad+\bigl(G_{0,2}H_0(\Psi_c-\mathbf 1)\bigr)[1]
+(G_{0,2}H_0\mathbf 1)[1]
\\&
=
-\bigl\langle (\Psi_c-\mathbf 1), G_{0,0}H_0(\Psi_c-\mathbf 1)\bigr\rangle\\
&\quad
-\overline{(G_{0,0}(\Psi_c-\mathbf 1))[1]}
\\&\quad
-\bigl(G_{0,0}(\Psi_c-\mathbf 1)\bigr)[1]
+(G_{0,2}H_0\mathbf 1)[1]
\\&
=
-\|\Psi_c-\mathbf 1\|^2
-2\mathop{\mathrm{Re}}(G_{0,0}(\Psi_c-\mathbf 1))[1]\\
&\quad+(G_{0,2}H_0\mathbf 1)[1]
.
\end{align*}
The last two terms are computed by using the explicit
expressions \eqref{G00} and \eqref{G02}.
Then we obtain \eqref{16082513}.
\end{proof}

\section{Exceptional threshold of the second kind}\label{1608241728}

Here we prove Theorem~\ref{thm-ex2}.
For the first part of the proof 
we can almost repeat the argument of the previous section,
but the second part is rather non-trivial.
In fact, we need the following lemma.
\begin{lemma}\label{lemma24}
Let $x_\nu\in\mathcal{L}^4$, $\nu=1,2$. Assume that
\begin{equation}\label{zero-cond}
\ip{\mathbf{n}}{x_\nu}=0,\quad \nu=1,2.
\end{equation}
Then one has that  $G_{0,0}x_\nu\in\mathcal{L}^2$, $\nu=1,2$, and that
\begin{equation}
\ip{x_1}{G_{0,2}x_2}=-\ip{G_{0,0}x_1}{G_{0,0}x_2}.
\end{equation}
\end{lemma}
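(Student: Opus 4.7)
The plan splits along the two assertions. For the inclusion $G_{0,0}x_\nu\in\mathcal{L}^2$, I would simply specialize Lemma~\ref{12.11.24.18.24} to $s=4$: the hypothesis $\langle\mathbf{n},x_\nu\rangle=0$ annihilates the constant term in the representation
\begin{equation*}
(G_{0,0}x_\nu)[n]=\langle\mathbf{n},x_\nu\rangle-\sum_{m\ge n}(m-n)x_\nu[m],
\end{equation*}
leaving only the tail, which the estimate in the proof of that lemma already places in $\mathcal{L}^{s-2}=\mathcal{L}^2$.

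For the pairing identity, the strategy is to show that $G_{0,2}x_2+G_{0,0}^2 x_2\in\mathbb{C}\mathbf{n}$ and then to kill this $\mathbf{n}$-component by pairing with $x_1$. Writing $y_2=G_{0,0}x_2\in\mathcal{L}^2$, Lemma~\ref{13.1.18.6.0} gives $H_0G_{0,0}y_2=y_2$ (since $y_2\in\mathcal{L}^2\subset\mathcal{L}^1$), and the operator identity $H_0G_{0,2}=-G_{0,0}$ from \eqref{18082023}, extended to $x_2\in\mathcal{L}^4$, yields $H_0G_{0,2}x_2=-y_2$. Adding these identities gives $H_0(G_{0,2}x_2+G_{0,0}y_2)=0$, so the sum lies in $\Ker H_0=\mathbb{C}\mathbf{n}$ (the kernel taken on sequences vanishing at the origin, compatibility with which follows from $G_{0,0}[0,m]=G_{0,2}[0,m]=0$ read off the explicit formulas \eqref{G00}, \eqref{G02}).

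Pairing with $x_1\in\mathcal{L}^4$ and using $\langle\mathbf{n},x_1\rangle=0$ to annihilate the $\mathbf{n}$-component then gives $\langle x_1,G_{0,2}x_2\rangle=-\langle x_1,G_{0,0}y_2\rangle$; a final transfer using the symmetric real kernel of $G_{0,0}\in\mathcal{B}(\mathcal{L}^1,(\mathcal{L}^1)^*)$, legitimate because both $x_1$ and $y_2$ lie in $\mathcal{L}^1$, converts the right side into $-\langle G_{0,0}x_1,G_{0,0}x_2\rangle$. The only genuine obstacle is the extension of \eqref{18082023} from rapidly decreasing inputs to $x_2\in\mathcal{L}^4$; I would handle it either by verifying the kernel identity $2G_{0,2}[n,m]-G_{0,2}[n\pm1,m]=-G_{0,0}[n,m]$ pointwise (including the boundary row $n=1$, which uses the formula \eqref{D-formula}) and invoking Fubini on the absolutely convergent double sum, or by approximating $x_2$ by finitely supported sequences within the orthogonality constraint $\langle\mathbf{n},\cdot\rangle=0$.
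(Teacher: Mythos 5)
Your proof is correct, but it takes a genuinely different route from the paper. The paper extends $x_\nu$ antisymmetrically to $\mathbb Z$, identifies $G_{0,0}$, $G_{0,2}$ with the full-line convolution kernels, and then simply invokes the corresponding full-line result (Ito--Jensen, Lemma 4.16) to get both the mapping property and the pairing identity, transferring back via the antisymmetry. You instead work directly on the half-line: the inclusion $G_{0,0}x_\nu\in\mathcal L^2$ is exactly the final assertion of Lemma~\ref{12.11.24.18.24} with $s=4$, and the identity follows from $H_0\bigl(G_{0,2}x_2+G_{0,0}(G_{0,0}x_2)\bigr)=0$, the fact that $\Ker H_0=\mathbb C\mathbf n$ (the Dirichlet condition being built into the $n=1$ row of \eqref{D-formula}), the orthogonality $\langle\mathbf n,x_1\rangle=0$ to remove the $\mathbf n$-component, and a Fubini-justified symmetry transfer of $G_{0,0}$; all pairings converge absolutely since $x_\nu\in\mathcal L^4$, $G_{0,0}x_2\in\mathcal L^2$, and $G_{0,2}\in\mathcal B^3$. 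The one point you rightly flag — extending $H_0G_{0,2}=-G_{0,0}$ from rapidly decreasing sequences to $\mathcal L^4$ — is indeed harmless here, because the relevant composition is $H_0G_{0,2}$ (not $G_{0,2}H_0$, which is the delicate direction the paper warns about), and the finite difference in $n$ commutes trivially with the absolutely convergent sum over $m$ once the kernel identity is checked, including the rows $n=1$ and $|n-m|\le 1$. What each approach buys: the paper's argument is short and avoids any kernel computation by outsourcing it to the full-line paper; yours is self-contained within the present paper, at the cost of a finite pointwise verification of the kernel identity for $G_{0,2}$.
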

\begin{proof}
We extend the sequences $x_\nu\in\mathcal L^4$, $\nu=1,2$, antisymmetrically 
to the whole line $\mathbb Z$ by letting
\begin{align*}
\widetilde x_\nu[n]=\mathop{\mathrm{sgn}}[n]x_\nu[|n|],\quad n\in\mathbb Z.
\end{align*}
Noting that 
the kernels $G_{0,0}[n,m]$ and $G_{0,2}[n,m]$ have the expressions
\begin{align*}
G_{0,0}[n,m]&=-\tfrac12\bigl(|n-m|-(n+m)\bigr),
\\
G_{0,2}[n,m]&=\tfrac1{12}\bigl(|n-m|-|n-m|^3\\
&\quad-(n+m)+(n+m)^3\bigr),
\end{align*}
we also define operators $\widetilde G_{0,0}$
and $\widetilde G_{0,2}$ mapping antisymmetric functions on $\mathbb Z$ to themselves
by the integral kernels
\begin{align}
\widetilde G_{0,0}[n,m]&=-\tfrac12|n-m|,\notag\\
\widetilde G_{0,2}[n,m]&=\tfrac1{12}\bigl(|n-m|-|n-m|^3\bigr),
\label{16082316}
\end{align}
respectively. 
Then it is easy to check that for $\nu=1,2$, $j=0,2$ and $n\ge 1$
\begin{align}
(G_{0,j}x_\nu)[n]=(\widetilde G_{0,j}\widetilde x_\nu)[n]
=-(\widetilde G_{0,j}\widetilde x_\nu)[-n].
\label{1608231}
\end{align}
On the other hand, the kernels \eqref{16082316}
are the same as the convolution kernels in Ito-Jensen\cite[equation (2.5)]{IJ},
and hence under assumption~\eqref{zero-cond} Ito-Jensen\cite[Lemma 4.16]{IJ} applies.
It follows that 
$\widetilde G_{0,0}\widetilde x_\nu\in\ell^{1,2}(\mathbb Z)$ and that 
\begin{align}
\langle \widetilde x_1,\widetilde G_{0,2}\widetilde x_2\rangle
=-\langle \widetilde G_{0,0}\widetilde x_1,\widetilde G_{0,0}\widetilde x_2\rangle.
\label{160823}
\end{align}
Then by \eqref{1608231} and \eqref{160823} the assertion follows.
\end{proof}

\begin{proof}[Proof of Theorem~\ref{thm-ex2}]
By the assumption and Corollary~\ref{160616438}
the leading operator $M_0$ from \eqref{M-expand} is not invertible
in $\mathcal B(\mathcal K)$. 
Write the expansion \eqref{M-expand} 
in the same form as \eqref{ex1-M-expand},
let $Q$ be the orthogonal projection onto $\Ker M_0$, and 
define $m(\kappa)$ by the same formula as \eqref{def-J0}.
Then by Proposition~\ref{12.11.9.3.28} we have the 
same formula as \eqref{16082255}. 
Again, $m(\kappa)$ defined by \eqref{def-J0}
has the same expansion \eqref{16082317}
with the same expressions \eqref{1608231743}--\eqref{1608231746}
for its coefficients, 
but this time we actually have
\begin{align}\label{am0b}
m_0=0,
\quad
m_1=QM_2Q,
\quad 
m_2=0.
\end{align}
In fact, by the assumption  we have 
\begin{align}
m_0=QM_1Q&=-\ket{Qv^*\mathbf{n}}\bra{Qv^*\mathbf{n}}=0,\notag\\
&\text{or}\quad  Qv^*\mathbf n=0,
\label{16082220}
\end{align}
and hence \eqref{am0b} follows by \eqref{m-expand}, 
\eqref{G01}, \eqref{16082220} and \eqref{1608231743}--\eqref{1608231746}.
Now we note that then the operator $m_1$ has to be invertible in 
$\mathcal B(Q\mathcal K)$.
Otherwise, we can apply Proposition~\ref{12.11.9.3.28} once more,
but this leads to a singularity of order 
$\kappa^{-j}$, $j\ge 3$, in the expansion of $R(\kappa)$,
which contradicts the self-adjointness of $H$.
Hence  the Neumann series 
provides an expansion of $m(\kappa)^\dagger$ of the form
\begin{equation}
m(\kappa)^\dagger=\sum_{j=-1}^{\beta-5}\kappa^jA_j+\mathcal{O}(\kappa^{\beta-4}),\quad
A_j\in\mathcal B(Q\mathcal K),
\label{ex1-m-expandb}
\end{equation}
with, e.g.\ 
\begin{align*}
A_{-1}&=m_1^{\dagger},\quad
A_0=-m_1^\dagger m_2m_1^\dagger
,\\
A_1&=-m_1^\dagger m_3m_1^\dagger+m_1^\dagger m_2m_1^\dagger m_2m_1^\dagger 
.
\end{align*}
These are actually simplified by \eqref{am0b} as
\begin{align}
&A_{-1}=m_1^{\dagger},\quad
A_0=0,\quad
A_1=-m_1^\dagger m_3m_1^\dagger
.
\label{1608252242}
\end{align}
The Neumann series also provides  an expansion of $(M(\kappa)+Q)^{-1}$ 
in the same form as \eqref{1608225}
with the same coefficients given there.
Now we insert the expansions \eqref{ex1-m-expandb} and \eqref{1608225} 
into the formula \eqref{16082255},
and then 
\begin{align}
M(\kappa)^{-1}
&=\sum_{j=-2}^{\beta-6}\kappa^jC_j+\mathcal O(\kappa^{\beta-5});
\notag\\
C_j&=B_j
+\sum_{\genfrac{}{}{0pt}{}{j_1\ge 0,j_2\ge -1,j_3\ge 0}{j_1+j_2+j_3=j+1}}
B_{j_1}A_{j_2}B_{j_3}
\label{160822553b}
\end{align}
with $B_{-2}=B_{-1}=0$.
We then insert the expansions \eqref{free-expan} with $N=\beta-4$ and \eqref{160822553b}
into the formula \eqref{second-resolvent}.
Finally we obtain the expansion
\begin{align*}
R(\kappa)&=\sum_{j=-2}^{\beta-6}\kappa^j
G_j+\mathcal O(\kappa^{\beta-5});\\
G_j&=G_{0,j}
-\sum_{\genfrac{}{}{0pt}{}{j_1\ge 0,j_2\ge -2,j_3\ge 0}{j_1+j_2+j_3=j}}
G_{0,j_1}vC_{j_2}v^*G_{0,j_3}
\end{align*}
with $G_{0,-2}=G_{0,-1}=0$.

Next we compute the coefficients.
We can use the above expressions of the coefficients 
to write 
\begin{align}
G_{-2}&
=
-G_{0,0}vC_{-2}v^*G_{0,0}\notag\\
&=
-G_{0,0}v m_1^{\dagger}v^*G_{0,0}\notag\\
&=z (Qv^*G_{0,2}vQ)^{\dagger}z^*.
\label{160826}
\end{align}
By this expression we can see that 
\begin{align*}
\mathop{\mathrm{Ran}}G_{-2}=(\mathop{\mathrm{Ker}}G_{-2})^\perp
\subset \mathcal E=\mathsf E.
\end{align*}
In addition, by Proposition~\ref{13.1.16.2.51} 
for any $\Psi\in \mathsf E$ we can write  
$\Psi=z\Phi=-G_{0,0}v\Phi$ 
for some $\Phi\in Q\mathcal K$, 
so that by Lemma~\ref{lemma24} 
\begin{align*}
\langle \Psi,G_{-2}\Psi\rangle
&=
-\langle G_{0,0}v\Phi,G_{0,0}v (Qv^*G_{0,2}vQ)^{\dagger}z^*\Psi\rangle\\
&=
\|\Psi\|_{\mathcal H}^2
.
\end{align*}
Since $G_{-2}$ is obviously self-adjoint on $\mathsf E$,
this implies that $G_{-2}$ coincides with the orthogonal projection
$P_0$ onto $\mathsf E$, as asserted in \eqref{ex2-G-2}.

As for $G_{-1}$, 
we can first write 
\begin{align*}
G_{-1}
&=
-G_{0,0}vC_{-1}v^*G_{0,0}\\
&\quad-G_{0,0}vC_{-2}v^*G_{0,1}
-G_{0,1}vC_{-2}v^*G_{0,0}.
%%\label{160822203}
\end{align*}
If we make use of the vanishing in \eqref{am0b}, \eqref{16082220}
and \eqref{1608252242},
we can easily verify \eqref{ex2-G-1} from this expression. 
We omit the details.

Next, we compute $G_1$.
Let us write, implementing $B_0A_*=A_*B_0=A_*$,
\begin{align*}
G_0
&=
G_{0,0}
-G_{0,0}vC_0v^*G_{0,0}\\
&\quad-G_{0,0}vC_{-1}v^*G_{0,1}
-G_{0,1}vC_{-1}v^*G_{0,0}
\\&\quad
-G_{0,0}vC_{-2}v^*G_{0,2}
-G_{0,1}vC_{-2}v^*G_{0,1}\\
&\quad-G_{0,2}vC_{-2}v^*G_{0,0}
\\&
=
G_{0,0}
-G_{0,0}v\bigl(B_0
+A_1
+A_0B_1
+B_1A_0\\
&\qquad+B_1A_{-1}B_1
+A_{-1}B_2
+B_2A_{-1}
\bigr)v^*G_{0,0}
\\&\quad
-G_{0,0}v\bigl(
A_0
+A_{-1}B_1
+B_1A_{-1}
\bigr)v^*
G_1\\
&\quad-G_{0,1}v\bigl(
A_0
+A_{-1}B_1
+B_1A_{-1}
\bigr)v^*G_{0,0}
\\&\quad
-G_{0,0}vA_{-1}v^*G_{0,2}
-G_{0,1}vA_{-1}v^*G_{0,1}\\
&\quad-G_{0,2}vA_{-1}v^*G_{0,0}.
\end{align*}
Let us now use some vanishing relations coming from  
\eqref{am0b}, \eqref{16082220},
and \eqref{1608252242}:
\begin{align*}
G_0
&=
G_{0,0}
-G_{0,0}v\bigl(B_0
+A_1
+B_1A_{-1}B_1\\
&\qquad+A_{-1}B_2
+B_2A_{-1}
\bigr)v^*G_{0,0}
\\&\quad
-G_{0,0}vA_{-1}B_1v^*G_{0,1}
-G_{0,1}vB_1A_{-1}v^*G_{0,0}
\\&\quad
-G_{0,0}vA_{-1}v^*G_{0,2}
-G_{0,2}vA_{-1}v^*G_{0,0},
\end{align*}
and then 
insert expressions for $A_*$ and $B_*$,
noting the kernels of operators
and implementing \eqref{am0b} and \eqref{16082220}.
We omit some computations, obtaining
\begin{align*}
G_0
&=
G_{0,0}
-G_{0,0}v\Bigl(M_0^\dagger +Q
-m_1^\dagger m_3m_1^\dagger
\\&\qquad
-m_1^\dagger  M_2(M_0^\dagger +Q)
-(M_0^\dagger +Q) M_2 m_1^\dagger 
\Bigr)v^*G_{0,0}
\\&\quad
-G_{0,0}vm_1^\dagger v^*G_{0,2}
-G_{0,2}vm_1^\dagger v^*G_{0,0}
.
\end{align*}
Next we unfold $m_3$.
We use the expressions 
$m_3=QM_4Q-QM_2M_0^\dagger M_2Q-m_1m_1$
and $QM_2Q=m_1$
which hold under \eqref{16082220}, and then
\begin{align*}
G_0
&
=
G_{0,0}
-G_{0,0}v(I-m_1^\dagger  M_2)
M_0^\dagger 
(I- M_2m_1^\dagger)
v^*G_{0,0}
\\&\quad
-G_{0,0}v\bigl(
Q
+m_1^\dagger m_1m_1m_1^\dagger\\
&\qquad-m_1^\dagger  m_1
-m_1 m_1^\dagger 
\bigr)v^*G_{0,0}
\\&\quad
-G_{0,0}vm_1^\dagger v^*G_{0,2}
-G_{0,2}vm_1^\dagger v^*G_{0,0}\\
&\quad+G_{0,0}vm_1^\dagger M_4m_1^\dagger v^*G_{0,0}
.
\end{align*}
Now we note that by \eqref{160826} we have 
\begin{align}
m_1^{\dagger}
=-Uv^*P_0vU\label{16082619}
\end{align}
and this operator is bijective as $Q\mathcal K\to Q\mathcal K$.
Hence we have 
\begin{align*}
G_0
&
=
G_{0,0}
-(G_{0,0}+P_0VG_{0,2})
vM_0^\dagger v^*
(G_{0,0}+ G_{0,2}VP_0)
\\&\quad
+P_0VG_{0,2}
+G_{0,2}VP_0
+P_0V G_{0,4}VP_0
\end{align*}
Furthermore, we make use of the identities $VP_0=-H_0P_0$, 
$P_0V=-P_0H_0$ and $H_0G_{0,j}=G_{0,j}H_0=G_{0,j-2}$ for $j\ge 2$:
\begin{align*}
G_0
&
=
G_{0,0}
-(G_{0,0}-P_0G_{0,0})
vM_0^\dagger v^*
(G_{0,0}- G_{0,0}P_0)\\
&\quad
-P_0G_{0,0}
-G_{0,0}P_0
+P_0 G_{0,0}P_0
\\&
=(I-P_0)
\bigl[G_{0,0}\\
&\qquad
-G_{0,0}
v(U+v^*G_{0,0}v)^\dagger v^*G_{0,0}
\bigr](1-P_0)
.
\end{align*}
This verifies \eqref{ex2-G0}. 

The computation of $G_1$ in this case is very long,
and we do not present all the detail in this paper.
We only describe some of important steps. 
First we can write it, using only $A_*$ and $B_*$,
\begin{align*}
G_1
&
=
G_{0,1}
\\&\quad
-G_{0,0}vA_{-1}v^*G_{0,3}
-G_{0,1}vA_{-1}v^*G_{0,2}\\
&\quad
-G_{0,2}vA_{-1}v^*G_{0,1}
-G_{0,3}vA_{-1}v^*G_{0,0}
\\&\quad
-G_{0,0}v\bigl(
A_{-1}B_1
+B_1A_{-1}
+A_0
\bigr)v^*G_{0,2}
\\&\quad
-G_{0,1}v\bigl(
A_{-1}B_1
+B_1A_{-1}
+A_0
\bigr)v^*G_{0,1}
\\&\quad
-G_{0,2}v\bigl(
A_{-1}B_1+B_1A_{-1}
+A_0
\bigr)v^*G_{0,0}
\\&\quad
-G_{0,0}v\bigl(
B_0
+A_{-1}B_2
+B_1A_{-1}B_1\\
&\qquad+B_2A_{-1}
\bigr)v^*G_{0,1}
\\&\quad
-G_{0,1}v\bigl(
B_0
+A_{-1}B_2
+B_1A_{-1}B_1\\
&\qquad+B_2A_{-1}
\bigr)v^*G_{0,0}
\\&\quad
-G_{0,0}v\bigl(
B_1
+A_{-1}B_3
+B_1A_{-1}B_2\\
&\qquad+B_2A_{-1}B_1
+B_3A_{-1}+A_0B_2
+B_1A_0B_1
\\&
\qquad
+B_2A_0
+A_1B_1
+B_1A_1
+A_2
\bigr)v^*G_{0,0}.
\end{align*}
Then we insert the expressions of $A_*$ and $B_*$.
If we implement some of vanishing relations coming from 
\eqref{am0b}, \eqref{16082220}, and \eqref{1608252242},
we arrive at
\begin{align*}
G_1&
=
G_{0,1}
-G_{0,0}vm_1^\dagger v^*G_{0,3}
-G_{0,3}vm_1^\dagger v^*G_{0,0}
\\&\quad
-G_{0,0}v\bigl(
M_0^\dagger 
-m_1^\dagger M_2M_0^\dagger 
\bigr)v^*G_{0,1}\\
&\quad-G_{0,1}v\bigl(
M_0^\dagger 
-M_0^\dagger  M_2m_1^\dagger 
\bigr)v^*G_{0,0}
\\&\quad
-G_{0,0}v\Bigl[
-M_0^\dagger  M_1M_0^\dagger\\ 
&\qquad+m_1^\dagger (- M_3M_0^\dagger
+M_2M_0^\dagger  M_1M_0^\dagger 
)
\\&\qquad
+(-M_0^\dagger  M_3
+M_0^\dagger  M_1M_0^\dagger  M_2
)m_1^\dagger \\
&\qquad-m_1^\dagger m_4m_1^\dagger\Bigr]v^*G_{0,0}
.
\end{align*}
If we insert \eqref{16082619} and 
$m_4=QM_5Q-QM_2JM_3Q-QM_3JM_2Q$,
which holds especially in this case due to the vanishing relations
noted above, we come to 
\begin{align*}
G_1&
=
G_{0,1}
+G_{0,0}VP_0VG_{0,3}
+G_{0,3}VP_0VG_{0,0}
\\&\quad
-G_{0,0}\bigl(
vM_0^\dagger v^*
+VP_0VG_{0,2}v M_0^\dagger v^*
\bigr)G_{0,1}
\\&\quad
-G_{0,1}\bigl(
vM_0^\dagger v^*
+vM_0^\dagger  v^*G_{0,2}VP_0V
\bigr)G_{0,0}
\\&\quad
-G_{0,0}\Bigl[
-vM_0^\dagger  v^*G_{0,1}vM_0^\dagger v^*
+ VP_0VG_{0,3}vM_0^\dagger v^*
\\&\qquad
-VP_0VG_{0,2}v M_0^\dagger  v^*G_{0,1}vM_0^\dagger v^*\\
&\qquad+vM_0^\dagger  v^*G_{0,3}VP_0V
\\&\qquad
-vM_0^\dagger  v^*G_{0,1}vM_0^\dagger  v^*G_{0,2}VP_0V \\
&\qquad-VP_0VG_{0,5}VP_0V
\\&\qquad
+VP_0VG_{0,2}v M_0^\dagger v^*G_{0,3}VP_0V\\
&\qquad+VP_0VG_{0,3}vM_0^\dagger v^*G_{0,2}VP_0V
\Bigr]G_{0,0}
.
\end{align*}
Finally we use $VP_0=-H_0P_0$, $P_0V=-P_0H_0$
and \eqref{18082023},
and then the expression \eqref{ex2-G1} is obtained.
Hence we are done.
\end{proof}

\section{Exceptional threshold of the third kind}\label{1608241729}

Finally we prove Theorem~\ref{thm-ex3}.
Compared with the proof of Theorem~\ref{thm-ex2},
this case needs one more application of the inversion formula, or
Proposition~\ref{12.11.9.3.28},
and the formulas get much more complicated.

\begin{proof}[Proof of Theorem~\ref{thm-ex3}]
Let us repeat arguments of the previous section to some extent.
We write the expansion \eqref{M-expand} 
in the same form as \eqref{ex1-M-expand},
let $Q$ be the orthogonal projection onto $\Ker M_0$, and 
define $m(\kappa)$ by the same formula as \eqref{def-J0}.
Then by Proposition~\ref{12.11.9.3.28} we have the 
same formula as \eqref{16082255}, again. 
The operator $m(\kappa)$ defined by \eqref{def-J0}
has the same expansion as \eqref{16082317}
with \eqref{1608231743}--\eqref{1608231746},
but without \eqref{am0b} or \eqref{16082220}
by the assumption and Corollary~\ref{160616438}.
Now we apply the inversion formula, Proposition~\ref{12.11.9.3.28},
to the operator $m(\kappa)$.
Write the expansion  \eqref{16082317} in the form
\begin{equation}
m(\kappa)=m_0+\kappa \widetilde m_1(\kappa).
\label{ex1-M-expandbbb}
\end{equation}
The leading operator $m_0$ is non-zero and not invertible in $\mathcal B(Q\mathcal K)$ 
by the assumption and Corollary~\ref{160616438}. 
Let $T$ be the orthogonal projection onto $\mathop{\mathrm{Ker}}m_0\subset Q\mathcal K$, 
and set 
\begin{equation}\label{def-J0bbb}
q(\kappa)=
\sum_{j=0}^{\infty}(-1)^j
\kappa^jT\widetilde{m}_1(\kappa)\bigl[(m_0^\dagger +T)\widetilde{m}_1(\kappa)\bigr]^{j}T.
\end{equation}
Then we have by Proposition~\ref{12.11.9.3.28} that 
\begin{align}
m(\kappa)^\dagger &=(m(\kappa)+T)^\dagger\notag\\
&\quad +\frac{1}{\kappa}
(m(\kappa)+T)^\dagger  q(\kappa)^{\dagger}(m(\kappa)+T)^\dagger .
\label{16082255b}
\end{align}
Using \eqref{16082317} and 
\eqref{ex1-M-expandbbb}, let us write \eqref{def-J0bbb} in the form
\begin{equation*}
q(\kappa)=\sum_{j=0}^{\beta-4}\kappa^jq_j+\mathcal{O}(\kappa^{\beta-3});
\quad 
q_j\in\mathcal B(T\mathcal K).
\end{equation*}
The first and the second coefficients are given as 
\begin{align}\label{am0bb}
q_0&=Tm_1T,\quad
q_1=Tm_2T-Tm_1(m_0^\dagger +T)m_1T
.
\end{align}
Here we note that the leading operator $q_0$ has to be invertible
in $\mathcal B(T\mathcal K)$.
Otherwise, applying Proposition~\ref{12.11.9.3.28} once again,
we can show that $R(\kappa)$ has a singularity
of order $\kappa^{-j}$, $j\ge 3$ in its expansion.
This contradicts the self-adjointness of $H$.
Hence we can use the Neumann series 
to write $q(\kappa)^\dagger$, and obtain
\begin{equation}
q(\kappa)^\dagger=\sum_{j=0}^{\beta-4}\kappa^jA_j+\mathcal{O}(\kappa^{\beta-3}),\quad
A_j\in\mathcal B(T\mathcal K),
\label{ex1-m-expandbb}
\end{equation}
where 
\begin{align*}
&A_0=q_0^{\dagger},\quad
A_1=-q_0^\dagger q_1q_0^\dagger.
\end{align*}
We also write $(m(\kappa)+T)^\dagger$ 
employing the Neumann series as 
\begin{align}
(m(\kappa)+T)^\dagger
=\sum_{j=0}^{\beta-3}\kappa^jC_j+\mathcal{O}(\kappa^{\beta-2})
\label{1608225b}
\end{align}
with $C_j\in\mathcal B(Q\mathcal K)$ and
\begin{align*}
C_0=m_0^\dagger +T,\quad
C_1=-(m_0^\dagger +T) m_1(m_0^\dagger +T).
\end{align*}
We first insert the expansions 
\eqref{ex1-m-expandbb} and \eqref{1608225b} into \eqref{16082255b}:
\begin{align}
m(\kappa)^\dagger
&=\sum_{j=-1}^{\beta-5}\kappa^jD_j+\mathcal O(\kappa^{\beta-4}),
\label{160822553bb}\\
D_j&=C_j
+\sum_{\genfrac{}{}{0pt}{}{j_1\ge 0,j_2\ge 0,j_3\ge 0}{j_1+j_2+j_3=j+1}}
C_{j_1}A_{j_2}C_{j_3},
\notag
\end{align}
with $C_{-1}=0$.
Next, noting that we have an expansion of 
$(M(\kappa)+Q)^{-1}$ in the same form as \eqref{1608225},
we insert the expansions \eqref{160822553bb} and \eqref{1608225} 
into \eqref{16082255}:
\begin{align}
M(\kappa)^{-1}
&=\sum_{j=-2}^{\beta-6}\kappa^jE_j+\mathcal O(\kappa^{\beta-5}),
\label{160822553bbb}
\\
E_j&=B_j
+\sum_{\genfrac{}{}{0pt}{}{j_1\ge 0,j_2\ge -1,j_3\ge 0}{j_1+j_2+j_3=j+1}}
B_{j_1}D_{j_2}B_{j_3},
\notag
\end{align}
with $B_{-2}=B_{-1}=0$. We finally 
inserting the expansions \eqref{free-expan} with $N=\beta-4$ and \eqref{160822553bbb}
into \eqref{second-resolvent},
and then obtain the expansion
\begin{align*}
R(\kappa)&=\sum_{j=-2}^{\beta-6}\kappa^j
G_j+\mathcal O(\kappa^{\beta-5}),\\
G_j&=G_{0,j}
-\sum_{\genfrac{}{}{0pt}{}{j_1\ge 0,j_2\ge -2,j_3\ge 0}{j_1+j_2+j_3=j}}
G_{0,j_1}vE_{j_2}v^*G_{0,j_3},
\end{align*}
with $G_{0,-2}=G_{0,-1}=0$.

Next we compute the first two coefficients.
Let us start with $G_{-2}$.
Unfolding the above expressions, we can see with ease that 
\begin{align*}
G_{-2}
&=-G_{0,0}vE_{-2}v^*G_{0,0}\\
&=-G_{0,0}v\bigl(TM_2T-TM_1(M_0^\dagger +T)M_1T\bigr)^\dagger v^*G_{0,0}
.
\end{align*}
Since 
\begin{align}
m_0=QM_1Q=-\ket{Qv^*\mathbf{n}}\bra{Qv^*\mathbf{n}},
\label{16082323}
\end{align}
it follows that
\begin{align}
Tv^*\mathbf  n=TQv^*\mathbf n=0.
\label{16082320}
\end{align}
Hence we have 
\begin{align*}
G_{-2}
=-G_{0,0}v(Tv^*G_{0,2}vT)^\dagger v^*G_{0,0}
,
\end{align*}
and we can verify the identity $G_{-2}=P_0$ in exactly the same manner as in the 
proof of Theorem~\ref{thm-ex2}.

As for $G_{-1}$, it requires a slightly longer computations,
and we proceed step by step.
We can first write, concerning $A_*,B_*,C_*,D_*,E_*$ only, 
\begin{align*}
G_{-1}
&=
-G_{0,0}vE_{-1}v^*G_{0,0}
-G_{0,0}vE_{-2}v^*G_{0,1}\\
&\quad-G_{0,1}vE_{-2}v^*G_{0,0}
\\&=
-G_{0,0}v\Bigl(B_0\bigl(C_0+C_0A_1C_0\\
&\quad\qquad+C_0A_0C_1+C_1A_0C_0\bigr)B_0
\\&\qquad
+B_0C_0A_0C_0B_1+B_1C_0A_0C_0B_0\Bigr)v^*G_{0,0}
\\&\quad
-G_{0,0}vB_0C_0A_0C_0B_0v^*G_{0,1}\\
&\quad
-G_{0,1}vB_0C_0A_0C_0B_0v^*G_{0,0}.
\end{align*}
Next, we implement the identities $B_0C_*=C_*B_0=C_*$ and $C_0A_*=A_*C_0=A_*$,
insert expressions of $A_*,B_*,C_*$, 
and then use \eqref{16082320}:
\begin{align*}
G_{-1}
&=
-G_{0,0}v\Bigl(C_0+A_1+A_0C_1\\
&\qquad+C_1A_0
+A_0B_1+B_1A_0\Bigr)v^*G_{0,0}
\\&\quad
-G_{0,0}vA_0v^*G_{0,1}
-G_{0,1}vA_0v^*G_{0,0}
\\&
=
-G_{0,0}v\Bigl(m_0^\dagger +T-q_0^\dagger q_1q_0^\dagger
-q_0^\dagger  m_1(m_0^\dagger +T)\\
&\qquad-(m_0^\dagger +T) m_1q_0^\dagger
\\&\qquad
-q_0^\dagger M_1(M_0^\dagger +Q)
-(M_0^\dagger +Q) M_1q_0^\dagger \Bigr)v^*G_{0,0}
\\&\quad
-G_{0,0}vq_0^\dagger v^*G_{0,1}
-G_{0,1}vq_0^\dagger v^*G_{0,0}.
\\&
=
-G_{0,0}v\Bigl(m_0^\dagger +T-q_0^\dagger q_1q_0^\dagger
-q_0^\dagger  m_1(m_0^\dagger +T)\\
&\qquad-(m_0^\dagger +T) m_1q_0^\dagger
 \Bigr)v^*G_{0,0}.
\end{align*}
We further unfold $q_1$ and $m_1$ and use \eqref{16082320}:
\begin{align*}
G_{-1}
&=
-G_{0,0}v\Bigl(m_0^\dagger +T+q_0^\dagger M_2(m_0^\dagger +T)M_2q_0^\dagger
\\&\qquad
-q_0^\dagger  M_2(m_0^\dagger +T)
-(m_0^\dagger +T) M_2q_0^\dagger
 \Bigr)v^*G_{0,0}
\\&
=
-G_{0,0}v(I-q_0^\dagger M_2)m_0^\dagger (I-M_2q_0^\dagger)v^*G_{0,0}
\\&\quad
-G_{0,0}v(I-q_0^\dagger M_2)T(I-M_2q_0^\dagger)v^*G_{0,0}.
\end{align*}
Since $TM_2T=Tm_1T=q_0T$ by \eqref{16082320}, the last term can actually be removed:
\begin{align*}
G_{-1}
=
-G_{0,0}v(I-q_0^\dagger M_2)m_0^\dagger (I-M_2q_0^\dagger)v^*G_{0,0}.
\end{align*}
Finally by \eqref{16082323} we can write 
\begin{align*}
m_0^\dagger =-\|Qv^*\mathbf{n}\|^{-4}\ket{Qv^*\mathbf{n}}\bra{Qv^*\mathbf{n}},
\end{align*}
and hence we obtain 
\begin{align*}
G_{-1}&
=|\Psi_c\rangle\langle \Psi_c|,\\
\Psi_c&=\|Qv^*\mathbf{n}\|^{-2}G_{0,0}v(I-q_0^\dagger 
v^*G_{0,2}v)Qv^*\mathbf n
\in \mathcal E.
\end{align*}
Let us verify that the above $\Psi_c$ is in fact the canonical resonance function.
For any $\Psi\in\mathsf E$ set $\Phi=w\Psi\in T\mathcal K$.
As in the proof of Theorem~\ref{thm-ex2} we can verify that 
\begin{align*}
\langle\Psi,\Psi_c\rangle
&=-\|Qv^*\mathbf{n}\|^{-2}\\
&\quad\times
\bigl\langle G_{0,0}vT\Phi,
G_{0,0}v(I-q_0^\dagger v^*G_{0,2}v)Qv^*\mathbf n\bigr\rangle\\
&=0.
\end{align*}
We can also prove that 
\begin{align*}
\langle V\mathbf n,\Psi_c\rangle 
&=\|Qv^*\mathbf{n}\|^{-2}\\
&\quad\times\bigl\langle V\mathbf n,
G_{0,0}v(I-q_0^\dagger v^*G_{0,2}v)Qv^*\mathbf n\bigr\rangle
\\&
=\|Qv^*\mathbf{n}\|^{-2}\\
&\quad\times\bigl\langle Uv^*\mathbf n,
(M_0-U)(I-q_0^\dagger v^*G_{0,2}v)Qv^*\mathbf n\bigr\rangle
\\&
=-\|Qv^*\mathbf{n}\|^{-2}
\bigl\langle v^*\mathbf n,(I-q_0^\dagger v^*G_{0,2}v)Qv^*\mathbf n\bigr\rangle
\\&=-1.
\end{align*}
This concludes the proof.
\end{proof}

\appendix
\section{Inversion formula}\label{1608211941}

In this appendix we present an inversion formula
needed in the proof of the main results of the paper.
The formula is quoted from Ito-Jensen\cite[Section 3.1]{IJ}, which 
in turn was adapted from Jensen-Nenciu\cite[Corollary 2.2]{JN0}.

Let us argue in a general context.

\begin{assumption}\label{12.11.9.1.54}
Let ${\mathcal K}$ be a Hilbert space and 
$A(\kappa)$ a family of bounded operators on ${\mathcal K}$
with $\kappa\in D\subset \mathbb{C}\setminus \{0\}$.
Suppose that
\begin{enumerate}
\item
The set $D\subset \mathbb{C}\setminus \{0\}$ 
is invariant under complex conjugation and accumulates at $0\in \mathbb{C}$.
\item\label{12.12.19.2.58}
For each $\kappa\in D$ the operator $A(\kappa)$ satisfies 
$A(\kappa)^*=A(\overline{\kappa})$ and has a bounded inverse $A(\kappa)^{-1}
\in \mathcal B(\mathcal K)$.
\item
As $\kappa\to 0$ in $D$, the operator $A(\kappa)$ has an expansion
in the uniform topology of the operators at ${\mathcal K}$:
\begin{align}
A(\kappa)=A_0+\kappa \widetilde A_1(\kappa);\quad  \widetilde A_1(\kappa)
={\mathcal O}(1).\label{12.11.9.3.40}
\end{align}
\item
The spectrum of $A_0$ 
does not accumulate at $0\in\mathbb{C}$ as a set.
\end{enumerate}
\end{assumption}

If the leading operator $A_0$ is invertible in $\mathcal B(\mathcal K)$, the Neumann series 
provides an inversion formula for the expansion of $A(\kappa)^{-1}$:
\begin{align*}
A(\kappa)^{-1}=\sum_{j=0}^\infty (-1)^j\kappa^j
A_0^{-1}\bigl[\widetilde A_1(\kappa)A_0^{-1}\bigr]^j.
\end{align*}
The inversion formula given below is useful
when $A_0$ is not invertible in $\mathcal B(\mathcal K)$.

We define the \textit{pseudo-inverse} $a^{\dagger}$ for 
a complex number $a\in \mathbb C$ by
\begin{align}
a^\dagger=
\begin{cases}
0&\text{if $a=0$},\\
a^{-1} & \text{if $a\neq 0$}.
\end{cases}
\label{13.3.5.22.28}
\end{align} 
Let $\mathcal K'\subset \mathcal K$ be a closed subspace.
We always identify $\mathcal B(\mathcal K')$
with its embedding in $\mathcal B(\mathcal K)$ in the standard way.
For an operator $A\in \mathcal B(\mathcal K')\subset 
\mathcal B(\mathcal K)$ 
we say that $A$ is \textit{invertible in $\mathcal B(\mathcal K')$} 
if there exists an operator $A^\dagger \in \mathcal B(\mathcal K')$
such that $A^\dagger A=AA^\dagger=I_{\mathcal K'}$,
which we identify with the orthogonal projection onto $\mathcal K'\subset \mathcal K$ as noted.
For a general self-adjoint operator $A$ on $\mathcal K$
we abuse the notation $A^\dagger$ also to denote the operator 
defined by the usual operational calculus 
for the function (\ref{13.3.5.22.28}).
The operator $A^\dagger$ for a self-adjoint operator $A$
belongs to $B(\mathcal K)$ 
if and only if the spectrum of $A$ 
does not accumulate at $0$ as a set,
and in such a case the above two $A^\dagger$ coincide.
In either case we call $A^\dagger$ the \textit{pseudo-inverse} of $A$.
The reader should note that we always use the notation $A^*$ for the adjoint  and the notation $A^\dagger$ for the pseudo-inverse.

\begin{proposition}\label{12.11.9.3.28}
Suppose Assumption~\ref{12.11.9.1.54}.
Let $Q$ be the orthogonal projection onto $\mathop{\mathrm{Ker}} A_0$,
and define the operator $a(\kappa)\in\mathcal B(Q{\mathcal K})$
by
\begin{align}
\begin{split}
a(\kappa)&=\tfrac{1}{\kappa}
\bigl\{I_{Q\mathcal K}-Q(A(\kappa)+Q)^{-1}Q\bigr\}\\
&=\sum_{j=0}^\infty (-1)^j\kappa^jQ\widetilde A_1(\kappa)
\bigl[(A_0^\dagger+Q)\widetilde A_1(\kappa)\bigr]^{j}Q.
\end{split}
\label{12.11.9.3.30}
\end{align}
Then $a(\kappa)$ is bounded in ${\mathcal B}(Q{\mathcal K})$ 
as $\kappa\to 0$ in $D$.
Moreover, for each $\kappa\in D$ sufficiently close to $0$ 
the operator $a(\kappa)$ is invertible in $\mathcal B(Q{\mathcal K})$, and 
\begin{align}
A(\kappa)^{-1}
&=(A(\kappa)+Q)^{-1}\notag\\
&\quad+\frac{1}{\kappa}(A(\kappa)+Q)^{-1}a(\kappa)^\dagger(A(\kappa)+Q)^{-1}.
\label{12.11.9.5.30}
\end{align}
\end{proposition}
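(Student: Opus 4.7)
The plan is to establish the inversion formula \eqref{12.11.9.5.30} by a Feshbach--Schur-type argument organized in four stages. First I note that $A_0$ is self-adjoint: passing $\kappa\to 0$ in the relation $A(\kappa)^*=A(\bar\kappa)$ from Assumption~\ref{12.11.9.1.54}.\ref{12.12.19.2.58} yields $A_0^*=A_0$, so $\Ker A_0$ is invariant for $A_0$ and $A_0Q=QA_0=0$. Combined with the assumption that $\sigma(A_0)$ does not accumulate at $0$, the functional calculus provides $A_0^\dagger\in\mathcal B(\mathcal K)$, and one checks directly that $(A_0+Q)^{-1}=A_0^\dagger+Q$. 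A Neumann series then gives, for $\kappa\in D$ with $|\kappa|$ small,
\begin{equation*}
(A(\kappa)+Q)^{-1}=\sum_{j=0}^\infty(-\kappa)^j(A_0^\dagger+Q)\bigl[\widetilde A_1(\kappa)(A_0^\dagger+Q)\bigr]^j.
\end{equation*}

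Second, to derive the series representation \eqref{12.11.9.3.30} for $a(\kappa)$, sandwich the Neumann series between two copies of $Q$. Since $QA_0^\dagger=0$, one has $Q(A_0^\dagger+Q)=(A_0^\dagger+Q)Q=Q$, so the $j=0$ term contributes $Q=I_{Q\mathcal K}$ while each $j\ge 1$ term collapses to $(-\kappa)^jQ\widetilde A_1(\kappa)[(A_0^\dagger+Q)\widetilde A_1(\kappa)]^{j-1}Q$. Dividing $I_{Q\mathcal K}-Q(A(\kappa)+Q)^{-1}Q$ by $\kappa$ then reproduces \eqref{12.11.9.3.30}, and uniform convergence of the underlying Neumann series for small $|\kappa|$ yields boundedness of $a(\kappa)$ in $\mathcal B(Q\mathcal K)$.

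Third, I would extract the algebraic identity driving the inversion formula. From $A(\kappa)=(A(\kappa)+Q)-Q$ one obtains, by the second-resolvent trick on each side,
\begin{align*}
A(\kappa)^{-1}&=(A(\kappa)+Q)^{-1}+A(\kappa)^{-1}Q(A(\kappa)+Q)^{-1},\\
A(\kappa)^{-1}&=(A(\kappa)+Q)^{-1}+(A(\kappa)+Q)^{-1}QA(\kappa)^{-1}.
\end{align*}
Setting $\eta(\kappa)=Q(A(\kappa)+Q)^{-1}Q$ and $\xi(\kappa)=QA(\kappa)^{-1}Q$ and applying $Q$ on both sides of either identity gives $(I_{Q\mathcal K}-\eta(\kappa))\xi(\kappa)=\eta(\kappa)$, i.e.\ $\kappa a(\kappa)\xi(\kappa)=\eta(\kappa)$.

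The main obstacle is establishing invertibility of $a(\kappa)$ in $\mathcal B(Q\mathcal K)$ for small $\kappa\in D$; the remainder is Neumann expansion and direct manipulation. My plan here is a Schur-complement computation in the splitting $\mathcal K=P\mathcal K\oplus Q\mathcal K$ with $P=I-Q$. For small $\kappa$ the block $PA(\kappa)P$ is close to $PA_0P$, invertible on $P\mathcal K$, hence invertible. Since $QA_0=A_0Q=0$, the three blocks $QA(\kappa)Q$, $QA(\kappa)P$ and $PA(\kappa)Q$ are each $\mathcal O(\kappa)$, so the Schur complement of $A(\kappa)$ relative to the $P$-block takes the form $S(\kappa)=\kappa B(\kappa)$ with $B(\kappa)\in\mathcal B(Q\mathcal K)$ bounded. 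Because $A(\kappa)$ is invertible by hypothesis, the block inverse formula forces $S(\kappa)$, hence $B(\kappa)$, to be invertible in $\mathcal B(Q\mathcal K)$. The analogous computation for $A(\kappa)+Q$ yields Schur complement $S_1(\kappa)=I_{Q\mathcal K}+\kappa B(\kappa)$ and the identity $\eta(\kappa)=S_1(\kappa)^{-1}$; then $I_{Q\mathcal K}-S_1(\kappa)^{-1}=\kappa B(\kappa)S_1(\kappa)^{-1}$ identifies $a(\kappa)=B(\kappa)S_1(\kappa)^{-1}$ as a product of two invertible operators. With $a(\kappa)^\dagger$ thus available, I would solve $\kappa a(\kappa)\xi(\kappa)=\eta(\kappa)$ for $\xi(\kappa)=\tfrac{1}{\kappa}a(\kappa)^\dagger\eta(\kappa)$, substitute back into the identities of the previous paragraph, and collapse the resulting expression using $a(\kappa)^\dagger a(\kappa)=Q$ (equivalently $a(\kappa)^\dagger\eta(\kappa)+\kappa Q=a(\kappa)^\dagger$) to obtain \eqref{12.11.9.5.30}.
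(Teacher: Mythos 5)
Your argument is correct and complete in outline. Note, though, that the paper itself supplies no proof of Proposition~\ref{12.11.9.3.28}: it quotes the formula from Ito--Jensen [Section 3.1], adapted from Jensen--Nenciu [Corollary 2.2], where the identity \eqref{12.11.9.5.30} is essentially verified by direct multiplication against $A(\kappa)$. Your route is a genuine (re)derivation: the limit argument giving $A_0^*=A_0$, $(A_0+Q)^{-1}=A_0^\dagger+Q$, the Neumann series sandwiched between two copies of $Q$ (using $Q(A_0^\dagger+Q)=(A_0^\dagger+Q)Q=Q$) correctly reproduces \eqref{12.11.9.3.30} and the boundedness of $a(\kappa)$; the two second-resolvent identities give $\kappa a(\kappa)\xi(\kappa)=\eta(\kappa)$ with $\xi=QA(\kappa)^{-1}Q$, $\eta=Q(A(\kappa)+Q)^{-1}Q$; and the Feshbach--Schur step is sound, since $PA(\kappa)P$ is invertible on $P\mathcal K$ for small $\kappa$, invertibility of $A(\kappa)$ forces invertibility of its Schur complement $S(\kappa)=\kappa B(\kappa)$, the Schur complement of $A(\kappa)+Q$ is $I_{Q\mathcal K}+\kappa B(\kappa)=S_1(\kappa)$ with $\eta(\kappa)=S_1(\kappa)^{-1}$, and hence $a(\kappa)=B(\kappa)S_1(\kappa)^{-1}$ is invertible. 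Two small points to tighten when writing this up: the two resolvent identities give respectively $\xi(I_{Q\mathcal K}-\eta)=\eta$ and $(I_{Q\mathcal K}-\eta)\xi=\eta$, so be explicit about which you use; and the final ``collapse'' requires substituting one resolvent identity into the other so that the sandwiched block $QA(\kappa)^{-1}Q$ actually appears, after which your identity $a(\kappa)^\dagger\eta(\kappa)=a(\kappa)^\dagger-\kappa Q$ cancels the extra $(A(\kappa)+Q)^{-1}Q(A(\kappa)+Q)^{-1}$ term and yields \eqref{12.11.9.5.30} exactly. Compared with the verification-by-multiplication in Jensen--Nenciu, your approach is slightly longer but has the advantage of explaining where the invertibility of $a(\kappa)$ comes from (it is an avatar of the Schur complement of the invertible operator $A(\kappa)$), rather than checking the formula after the fact.
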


\bigskip

\section*{acknowledgements}
The authors would like to thank Shu Nakamura 
for commenting on a general boundary condition.
KI was partially supported by JSPS KAKENHI Grant Number JP25800073.
The authors were partially supported by the Danish Council 
for Independent Research $|$ Natural Sciences, Grant DFF--4181-00042.

\end{document}